\documentclass[conference]{IEEEtran}

\usepackage[utf8]{inputenc} 
\usepackage[T1]{fontenc}    
\usepackage{url}            
\usepackage{booktabs}       
\usepackage{amsfonts}       
\usepackage{nicefrac}       
\usepackage{microtype}      
\usepackage{xcolor}

\usepackage{algorithmic}
\usepackage{epsfig}
\usepackage{epstopdf}
\usepackage{graphicx}
\usepackage{latexsym}
\usepackage{amsmath}
\usepackage{amsfonts}
\usepackage{amssymb}

\usepackage{mathrsfs}
\usepackage{float}
\usepackage{pifont}
\usepackage{yhmath}

\usepackage{eufrak}
\usepackage{amsthm}
\usepackage{booktabs}
\usepackage{wasysym}
\usepackage{array}
\usepackage[ruled]{algorithm2e}

\usepackage{bigstrut,multirow,rotating}

\usepackage{graphicx}
\usepackage{caption}
\usepackage{subcaption}

\newtheorem{theorem}{Theorem}
\newtheorem{lemma}{Lemma}
\newtheorem{corollary}{Corollary}
\newtheorem{definition}{Definition}

\theoremstyle{remark}

\makeatletter
\newcommand\figcaption{\def\@captype{figure}\caption}
\newcommand\tabcaption{\def\@captype{table}\caption}
\makeatother

\usepackage{ulem}

\hyphenation{op-tical net-works semi-conduc-tor}

\begin{document}
%
\title{
Distributed Decisions on Optimal Load Balancing\\ in Loss Networks 
}
%
%
%

\author{Qiong~Liu
        ,~Chenhao~Wang* and Ce Zheng 
\thanks{Q. Liu is with INFRE, Telecom Paris, France. email: Qiong.Liu@telecom-paris.fr}
\thanks{* C. Wang is the corresponding author, and is with Beijing Normal  University, China. email: chenhwang@bnu.edu.cn}
}

\author{
\IEEEauthorblockN{Qiong Liu\IEEEauthorrefmark{1}, Chenhao~Wang\IEEEauthorrefmark{2}, Ce Zheng\IEEEauthorrefmark{1}}
\IEEEauthorblockA{\IEEEauthorrefmark{1}Télécom Paris, Institut Polytechnique de Paris,  France}
\IEEEauthorblockA{\IEEEauthorrefmark{2}Beijing Normal  University, China}

\IEEEauthorblockA{Email: qiong.liu@elecom-paris.fr, chenhwang@bnu.edu.cn, ce.zheng@telecom-paris.fr }
}

\maketitle

\begin{abstract}
When multiple users share a common link in direct transmission, packet loss and link congestion may occur due to the simultaneous arrival of traffics at the source node. To tackle this problem, users may resort to an indirect path: the packet flows are first relayed through a sidelink to another source node, then transmitted to the destination. This behavior brings the problems of packet routing or load balancing: (1) how to maximize the total traffic in a collaborative way; (2) how self-interested users choose routing strategies to minimize their individual packet loss independently.
In this work, we propose a generalized mathematical framework to tackle the packet and load balancing issue in loss networks.  In centralized scenarios with a planner, we provide a polynomial-time  algorithm to compute the system optimum point where the total traffic rate is maximized. Conversely, in decentralized settings with autonomous users making distributed decisions, the system converges to an equilibrium where no user can reduce their loss probability through unilateral deviation. We thereby provide a full characterization of Nash equilibrium and examine the efficiency loss stemming from selfish behaviors, both theoretically and empirically. In general, the performance degradation caused by selfish behaviors is not catastrophic; however, this gap is not monotonic and can have extreme values in certain specific scenarios. 
\end{abstract}

\begin{IEEEkeywords}
load balancing, Nash equilibria, price of anarchy, network congestion, sidelink
\end{IEEEkeywords}

\IEEEpeerreviewmaketitle

\section{Introduction}
Since the seminal work of Erlang \cite{erlang1917solution}, loss networks have played a crucial role in analyzing and optimizing stochastic systems involving simultaneous resource utilization, and non-backlogging workloads (for an extensive overview, see \cite{jung2019revisiting}). Meanwhile, in the post-5G era, cloud-enabled networks have emerged as a dominant architecture, where multiple servers collect data from users and relay it to a central hub for final processing. To guarantee network efficacy, that is no server is either overburdened or underutilized, load balancing strategies are well studied, e.g., \cite{cao2013optimal}. In this context, loss networks provide valuable mathematical frameworks for comprehending and enhancing load distribution within cloud-enabled networks.

Early load balancing research for cloud-enabled networks focused on centralized scenarios, where a centralized planner scheduled workloads to optimize aspects like performance-energy tradeoffs \cite{cao2013optimal} and algorithmic considerations \cite{ghomi2017load,zhao2015heuristic}. However, due to the stringent latency requirement for real-time decisions and the increasing signaling overhead caused by the large-scale deployment of servers and massive users, distributed decisions become a better solution. In this context, the complexity of the problem increases due to the  non-cooperative and competitive behaviors among users within the system.

To address the challenges of load balancing in a distributed way, game theory provides a mathematical framework that describes and analyzes scenarios with interactive decisions \cite{scutari2010convex}. Till now, some studies have demonstrated the efficacy of game-theoretic models in addressing load balancing problems. For instance, Mondal et al. \cite{mondal2020game} developed a game-theoretic model for load balancing among competitive cloudlets, while Yi et al. \cite{yi2020queueing} investigated a similar problem, incorporating additional considerations of queue-aware strategies.  In \cite{altman2014routing,toure2020congestion}, symmetric loss models where each source has an equal number of users are considered. However, previous studies mostly focused on limited cases of identical user strategies, which may not reflect real-world scenarios, i.e., different users may have different objectives and preferences. Therefore, further research is needed to develop game-theoretic models that can address the challenges of load balancing  in a more general and realistic manner.

In this paper, we employ game theory to address load balancing in both distributed and centralized environments, where users  have non-identical strategies and  the number of users is not evenly distributed. Specifically, we consider the load balancing in a cloud-enabled  network consisting of $m$ source nodes  (servers) $\{s_1,\ldots,s_m\}$ and one destination node (central hub) $d$. Each source $s_i$ has $n_i$ users seeking  service, and the traffic originating from each user is assumed to follow an independent Poisson point process with an identical rate. The  nodes in the network are connected by two types of communication links, namely sidelinks that connect two sources, and direct links that connect a source and destination. The sidelink has a random identical independent distribution (i.i.d) loss with a fixed probability $q$, and the direct link has a congestion loss that depends on the arrival rate and service rate of each server. 

The user cannot split its trafﬁc, and has to determine how to route all of its traffic from the source node arrived at to the destination node. There are two approaches for the traffic transmission: a direct path (DP) in which the packet goes directly from the source arrived at to the destination, and an indirect path (IP) in which the packet is first relayed to another source node and then takes the direct link from that node to the destination. 
We treat packet loss probability as the performance metric in load balancing, instead of additive costs like delay or fees in classical routing games \cite{toure2020congestion, patriksson2015traffic}, resulting in a non-additive and non-convex optimization process. Each user aims to minimize its own loss probability and engage in a game by strategically selecting its own path. In the end, no user can reduce its loss probability by unilateral deviation and reaches the state of \textit{Nash Equilibrium} (NE).

\subsection{Our Contributions}
Our work contributes to the load balancing game in the following aspects: First, we prove two lemmas related to the optimal solution when a centralized planner exists. Based on these lemmas, a low-complexity algorithm that maximizes the total traffic is proposed.
Second, we study the decentralized environment where decisions are made by autonomous and self-interested users. The sufficient and necessary conditions on NE are derived, which depend on the number of users on direct path and each indirect path. 
Moreover, since a NE may be suboptimal, we use the price of anarchy (PoA) \cite{koutsoupias1999worst} to measure the gap between the NE led by users' selfish behaviors and the system optimum achieved by the centralized planner. 

The rest of the paper is structured as follows. The formal model and notations are presented in Section \ref{sec:model}. In Section~\ref{sec:opt}, we provide details  to compute the optimal solution that maximizes the total traffic when a centralized planner exists. In Section~\ref{sec:ne_poa}, we study the NE in the decentralized decision-making scenarios, and analyzed the efficiency loss stemming from selfish behaviors. In Section \ref{sec:twopure}, a fine-grained analysis is performed on the existence of NE in various network configurations for a specific scenario involving two source nodes. 
Numerical results are presented and discussed in Section~\ref{sec:sim}.  Finally, Section~\ref{sec:con} concludes the paper and outlines some future work. 

\subsection{Other related works}
\emph{Routing games.} 
As a special class of congestion games, routing games in a network are problems of routing traffic to achieve
the best possible network performance, and  have been studied within various contexts
and within various communities, for example,  the mathematics community \cite{rosenthal1973class}, the telecommunications \cite{acemoglu2018informational}, and theoretical computer science \cite{chen2018equilibrium,chen2020efficiency}. The above references have all in common a cost framework which is additive over links,
such as delays or tolls, and is flow conserving (the amount entering a node equals the amount leaving it).  Routing games with non-additive cost in loss networks are studied in \cite{chowdhury2018non}.

\emph{Braess-like paradox in distributed systems.}
The  Braess-like paradox is said to occur in a network system with distributed behaviors if adding an extra link or adding communication capacity to the system leads to a worse system performance. It widely exists in transportation networks and queuing networks. Bean {et al.} \cite{bean1997braess} show that it can occur in loss networks. 
Kameda {et al.} \cite{kameda2000braess} consider a model similar to ours in that a job (packet) can be processed directly or indirectly; however, they do not consider the loss probability.  They identify a Braess-like paradox in which adding capacity to the channel may degrade the system performance on the response time.
Kameda and Pourtallier \cite{kameda2002paradoxes} characterize conditions under which such paradoxical behavior occurs, and give examples in which the degradation of performance may increase without bound.

\section{Model and Preliminaries}\label{sec:model}
We abstractly model our problem using a graph. Consider a network with $m$ source nodes $S=\{s_1,\ldots,s_m\}$ and one destination node $d$.  For each source node $s_i\in S$, let $N_i$ be the set of users arriving at $s_i$, and $n_i=|N_i|$ be the number of such users. Without loss of generality, we assume $n_1 \!> \!n_2 \!>\! \dots\!>\!n_m$. Denote $[m]=\{1,\ldots,m\}$. There is a total of $n=\sum_{i\in[m]}n_i$ users in the system, who are self-interested players in the game. We say players and users interchangeably throughout this paper. Each user is identified with a flow (or traffic) of packets, which originates from the user and is assumed to form an independent Poisson process with an identical rate $\phi$. See  Fig.~\ref{fig:ill} for illustration. 


%

 \begin{figure}
    \centering
    \includegraphics[scale=0.6]{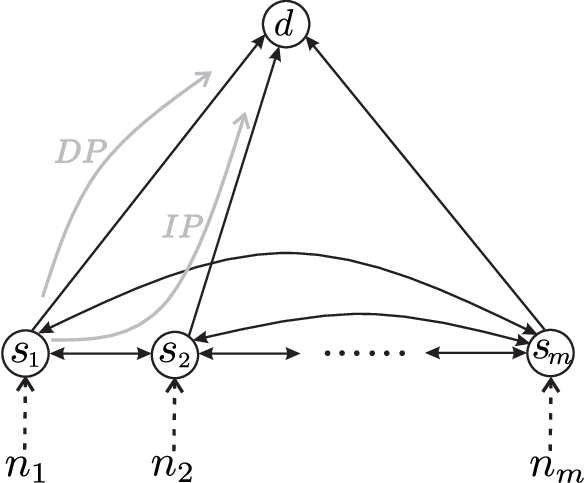}
    \caption{An illustration of the game model.}
    \label{fig:ill}
\end{figure}

Each user controls its route that all its packets should follow. For a user associated with $s_i\in S$, there are only two types of routes to ship these packets to the destination $d$:  either a direct path (DP)  $(s_i,d)$, or  an indirect two-hop path (IP) $(s_i,s_j,d)$ for some $s_j\neq s_i$, in which the packet is first sent to another source $s_j$ by the side link $(s_i,s_j)$, and then passes through the direct link $(s_j,d)$.  



\smallskip\textbf{Strategies.} For every source $s_i$, each user $k\in N_i$ decides a one-shot \textit{strategy} $\mathbf p_k^{(i)}=\left( p_{k1}^{(i)},\ldots,p_{km}^{(i)} \right)^T\in [0,1]^m$ with $\sum_{j\in[m]}p_{kj}^{(i)}=1$, where $p_{ki}^{(i)}$ is the probability of routing all packets through DP, and $p_{kj}^{(i)}~(j\in[m],j\neq i)$ is the probability of routing all packets through IP $(s_i,s_j,d)$. 
{When no confusion arises, we simply write the strategy $\mathbf p_k^{(i)}$ as $\mathbf p_k$.}
We focus on \textit{pure strategies} in this paper: a strategy $\mathbf p_k$ is \textit{pure} if {$\|\mathbf p_k\|_{\infty}=1$}, i.e., user $k$ deterministically selects a route with probability 1 (for example, $\mathbf p_k=(0,1,0,\ldots,0)^T$). 
Let $\mathbf p=(\mathbf p_1,\ldots,\mathbf p_n)$ be the strategy profile of all users. 







\smallskip\textbf{Loss probability and loss rate.}
There are two types of losses: (1) 
Losses on side  links. We assume that a packet originating from node $s_i$ and relayed to node $s_j$ is lost with a fixed probability $q$ for every side  link $(s_i,s_j)$, independently of any other loss. Denote by $\bar q=1-q$ the probability that a packet is successfully relayed.  (2) Congestion losses on direct links. We assume that there is no buffer to restore the backlogged packets, {so a packet will be lost when it enters the direct path which is occupied for the transmission of another packet.} The transmission time of a packet {on} a direct link $(s_i,d)$ is a random variable $\sigma$ following a distribution $\mathcal X$, {which is assumed to be an identically independent distribution (i.i.d) for all packets}.  


Given  strategy profile $\mathbf p$,  user $k\in N_i$ continuously sends packets that follow an independent Poisson process with rate $p_{ki}^{(i)}\cdot\phi$ to DP $(s_i,d)$, and an independent Poisson process of packets with rate $p_{kj}^{(i)}\cdot\phi$ to IP $(s_i,s_j,d)$, for any $s_j\neq s_i$. 
{Since there is a random
loss on} the side  link $(s_i,s_j)$, the flow of packets from user $k\in N_i$ that arrive at the node $s_j$ is also a Poisson process with rate  $\bar{q}p_{kj}^{(i)}\phi$. 




Thus, for each source $s_i\in S$, the flow over the link $(s_i,d)$ is Poisson distributed with a
traffic rate $T_i(\mathbf{p})$ given by
\begin{align}\label{eq:ti}
    T_i(\mathbf p)=\sum_{k\in N_i}p_{ki}^{(i)} \phi + \sum_{j\in[m]\backslash\{i\}}\sum_{k\in N_j} p_{ki}^{(j)} \bar{q}\phi.
\end{align}

\noindent When no confusion arises, we simply write $T_i(\mathbf p)$ as $T_i$. 

The probability of no {congestion} loss on the direct link $(s_i,d)$  equals the probability that there is no arrival during a transmission time $\sigma$, which is given by
$\operatorname*{\mathbb E}_{\sigma\sim \mathcal X}e^{-T_i\sigma}$.
As usual  practice, assume $\mathcal X$ is an exponential distribution with a rate parameter $\mu$ (service rate) and mean $\frac{1}{\mu}$. Thus the probability of no {congestion} loss on $(s_i,d)$ is
\begin{align}\label{eq:loss_prob_direct}
    \operatorname*{\mathbb E}_{\sigma\sim \mathcal X}e^{-T_i\sigma}=\int_0^{+\infty}\mu e^{-\mu\sigma} e^{-T_i\sigma}d\sigma=\frac{\mu}{T_i+\mu},
\end{align}

\noindent and the loss probability  on link $(s_i,d)$ is $\frac{T_i}{T_i+\mu}$.



{
Given the strategy profile $\mathbf p$, for $s_i\in S$ and $k\in N_i$, the \textit{loss rate} of user $k$ is defined as
\begin{align}
\label{eq:lossrate}
    LR_k(\! \mathbf{p} \!) \!\!=\!\!\! \left[\! p_{ki}^{(i)}\frac{T_i}{T_i \!+\! \mu} \!\!+\!\! \left(\! 1 \!-\! p_{ki}^{(i)} \! \right) \! q \!+\!
    (\! 1 \!-\! q)  \!\!\!\!\!\!\!\!  \sum_{j\in[m]\backslash\{i\}} \!\!\!\!\!\!\!  p_{kj}^{(i)}\frac{T_j}{T_j \!+\! \mu} \!\right] \!\! \phi,
\end{align}
and the loss probability of user $k$ is $\frac{LR_k(\mathbf p)}{\phi}$.

}

\smallskip\textbf{Total traffic.} Regarding the system efficiency, we measure it by the \textit{total traffic rate}  arriving at the destination $d$. Given the strategy profile $\mathbf p$, 
 the total traffic rate $TR(\mathbf p)$ of the system can be derived in two ways. The first expression is derived as the summation of successful transmission rates on direct links:
\begin{align}
\label{equ:29}
    &TR(\mathbf p)=\sum_{i\in [m]}T_i\cdot\frac{\mu}{T_i+\mu} \notag \\
       &=\mu \!\left[\! m-\sum_{i\in [m]}\frac{\mu}{\sum\limits_{k\in N_i} p_{ki}^{(i)}\phi +\!\!\! \sum\limits_{ j\in[m] \backslash \{i\} }\sum\limits_{k\in N_j}p_{ki}^{(j)}\bar{q}\phi+\mu} \right]
\end{align}
where $T_i$ is the traffic rate over link $(s_i,d)$, and $\frac{\mu}{T_i+\mu}$ is the probability of no congestion loss on $(s_i,d)$.




The second expression is from users' perspective:
\begin{align}\label{equ:hk}
    TR(\mathbf p)&:=\sum_{i\in [m]}\sum_{k\in N_i}(\phi-LR_k),
\end{align}
where $\phi-LR_k(\mathbf p)$ is the traffic rate of user $k\in N_i$ that successfully arrive at $d$. It is not hard to see that (\ref{equ:29}) and (\ref{equ:hk}) are equivalent.



\smallskip\textbf{Nash equilibria.} A Nash equilibrium (NE) is a strategy profile where no player can decrease its loss probability by unilaterally deviating to any other strategy. Formally, we give a definition.
\begin{definition}\label{def:ne}
A strategy profile $\mathbf p$ is a \textit{Nash equilibrium}, if for any source $s_i\in S$ and any player $k\in N_i$, we have
$$LR_k(\mathbf p_k,\mathbf p_{-k})\le LR_k(\mathbf p_k',\mathbf p_{-k}),$$
where $\mathbf p_k'$ can be any feasible strategy of player $k$, and  $\mathbf p_{-k}$ is the strategy profile of all other players.
\end{definition}




We measure the efficiency of NEs by the price of anarchy (PoA) \cite{koutsoupias1999worst},  which is defined as the ratio between social efficiencies in an optimal solution and in the worst NE. 
Formally, given an instance $\Gamma$ of this game, {we} define
$$PoA(\Gamma)=\frac{TR(opt)}{\min_{\mathbf p\in\mathbb{NE}}TR(\mathbf p)}.$$
where $opt$ is an optimal solution of $\Gamma$, and $\mathbb{NE}$ is the set of all NEs. The PoA of the whole game is defined as the maximum over all instances, that is, $PoA=\max_{\Gamma}PoA(\Gamma)$.

\section{Centralized Analysis}
\label{sec:opt}

%

The main technical results of the paper are presented now. We show how to compute an optimal solution that maximizes the total traffic.
Note that the total traffic rate depends on the number of users working on each source by DP or IP, but not the users' identity.  
Given a strategy profile $\mathbf p$, let $u_i=|\{k\in N_i~|~p_{ki}^{(i)}=1\}|$ be the number of users working with DP $(s_i,d)$, and let $v_i=|\{k\in N_j, j \in [m] \backslash {i}|~p_{ki}^{(j)}=1\}|$ be the users working with IP through link $(s_i,d)$. Define $y_i=u_i+v_i$ as the number of users who choose source $s_i$ (including both DP and IP).



\begin{lemma}\label{lem:both}
In any optimal solution, for any source $s_i$,  either $u_i=n_i$ or $v_i=0$ or both hold. 
\end{lemma}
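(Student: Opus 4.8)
The plan is a local exchange argument: I will show that any (pure) strategy profile that violates the claim at some source $s_i$ cannot be optimal, by exhibiting an explicit reroute of one or two users that strictly increases $TR(\mathbf p)$. First I would record the only analytic fact I need. Writing $g(T)=\tfrac{\mu T}{T+\mu}$, formula (\ref{equ:29}) reads $TR(\mathbf p)=\sum_{i\in[m]}g(T_i)$, and $g$ is strictly increasing on $[0,\infty)$; moreover, by (\ref{eq:ti}), for a pure profile $T_i=u_i\phi+v_i\bar q\phi$. Hence, to contradict optimality it suffices to produce a feasible pure profile $\mathbf p'$ in which no $T_r$ decreases and at least one strictly increases. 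Throughout I assume $q\in(0,1)$ (if $q=1$ no optimum uses an IP, so $v_i=0$ for all $i$; the case $q=0$ is degenerate).

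Now suppose, for contradiction, that $\mathbf p$ is optimal but $u_i<n_i$ and $v_i\ge 1$ for some $s_i$. Since $u_i<n_i$, pick a user $a\in N_i$ routing on an IP $(s_i,s_\ell,d)$ with $\ell\neq i$; since $v_i\ge 1$, pick a user $b\in N_j$ for some $j\neq i$ routing on the IP $(s_j,s_i,d)$. Note $a\neq b$, $i\neq j$, $i\neq \ell$. Build $\mathbf p'$ from $\mathbf p$ by switching $a$ to its direct path $(s_i,d)$, and switching $b$ to the IP $(s_j,s_\ell,d)$ if $\ell\neq j$, or to its direct path $(s_j,d)$ if $\ell=j$. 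A direct computation of $\Delta T_r:=T_r(\mathbf p')-T_r(\mathbf p)$ from (\ref{eq:ti}) gives: if $\ell\neq j$, then on $T_i$ user $a$ adds $\phi$ and user $b$ removes $\bar q\phi$, so $\Delta T_i=\phi-\bar q\phi=q\phi>0$, while on $T_\ell$ the loss of $a$'s $\bar q\phi$ is exactly offset by $b$'s new $\bar q\phi$, so $\Delta T_\ell=0$, and all other $\Delta T_r=0$; if $\ell=j$, then $\Delta T_i=\phi-\bar q\phi=q\phi>0$ and symmetrically $\Delta T_j=\phi-\bar q\phi=q\phi>0$, all other $\Delta T_r=0$. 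In both cases no traffic rate drops and at least one strictly rises, so $TR(\mathbf p')>TR(\mathbf p)$, contradicting optimality; this proves the lemma.

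The steps needing care are: (i) feasibility of the reroute of $b$ — this is precisely why the sub-case $\ell=j$ must be separated, since a user of $N_j$ has no "IP through $s_j$" route available, so there $b$ is sent to its direct path instead; and (ii) the incidence bookkeeping for $\Delta T_r$, where one must check that the old and new routes of $a$ and $b$ touch no direct link other than $(s_i,d)$ (and, in the degenerate sub-case, $(s_j,d)$), so that indeed only $T_i$ (resp. $T_i$ and $T_j$) changes. Neither point is deep; the "hard part" is merely organizing the case split and verifying the cancellations cleanly, and noting that the strict increase relies on $q>0$.
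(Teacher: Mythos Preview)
Your proof is correct and follows essentially the same local-exchange argument as the paper: switch the outgoing IP user at $s_i$ to DP and redirect the incoming IP user to the vacated intermediate node, then observe that only $T_i$ (and possibly $T_j$) changes, and strictly increases since $q>0$. In fact you are slightly more careful than the paper, which writes the reroute of the second user as the IP $(s_j,s_{i'},d)$ without separating the sub-case $i'=j$; your explicit handling of $\ell=j$ (sending $b$ to its DP instead) closes that small gap.
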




\begin{proof}
 Let $\mathbf p$ be an optimal solution. Suppose for contradiction that $u_i<n_i,v_i>0$ for some source $s_i$. Then there exists a user (say, $k$) in $N_i$ who chooses IP (say, $(s_i,s_{i'},d)$ for some $i'\neq i$). Also, since $v_i>0$, there exist a source $s_j\neq s_i$ and a user $l\in N_j$ who chooses IP $(s_j,s_i,d)$. The total traffic rate is $TR(\mathbf p)=\frac{\mu T_i}{T_i+\mu}+\sum_{w\in [m]\backslash\{i\}}\left(\frac{\mu T_w}{T_w+\mu}\right)$.

Now we show that  the total traffic rate can be improved by revising $\mathbf p$. Let user $k\in N_i$ choose DP, and let user $l\in N_j$ choose IP $(s_j,s_{i'},d)$. Fixing all others' strategies, denote the new strategy profile by $\mathbf p'$, and define $u_i',v_i'$ accordingly. Note that $u_i'=u_i+1,v_i'=v_i-1$, and $T_w(\mathbf p')=T_w(\mathbf p)$ for all source $s_w\neq s_i$. Since $q>0$, we have 
$$T_i(\mathbf p')=(u_i+1)\phi+(v_i-1)\bar{q}\phi> u_i\phi+v_i\bar{q}\phi=T_i(\mathbf p). $$
So $TR(\mathbf p')>TR(\mathbf p)$, contradicting to the optimality.
\end{proof}

Lemma~\ref{lem:both} indicates that if a source (say $s_i$) provides service to the  users of other sources, then all users of $s_i$ choose DP.

\begin{lemma}\label{lem:opt}
In any optimal solution, there must exist $\tilde i\in[m]$, such that $v_l=0$ for all $l\le \tilde i$, and $u_j=n_j$ for all $j>\tilde i$.  
\end{lemma}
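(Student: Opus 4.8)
The plan is to bootstrap from Lemma~\ref{lem:both}, which already tells us that for every source $s_i$ we have $u_i = n_i$ or $v_i = 0$. The claim of Lemma~\ref{lem:opt} is essentially that these two ``types'' of sources can be separated by a threshold index $\tilde i$ when sources are listed in the fixed order $n_1 > n_2 > \dots > n_m$: small-index (large $n_i$) sources all have $v_l = 0$ (they send users out but receive none), and large-index (small $n_j$) sources all have $u_j = n_j$ (all their own users stay home). So the real content is a \emph{monotonicity} statement: it cannot happen that a source with many users receives relayed traffic while a source with fewer users does not keep all of its own users. First I would isolate the obstruction cleanly: by Lemma~\ref{lem:both}, the only sources not covered by ``$v_l=0$'' are those with $u_l = n_l$; and the only sources not covered by ``$u_j=n_j$'' are those with $v_j = 0$. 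Define $A = \{i : v_i > 0\}$ (forcing $u_i = n_i$) and $B = \{i : u_i < n_i\}$ (forcing $v_i = 0$); these are disjoint by Lemma~\ref{lem:both}, and every index lies in $A$, in $B$, or in neither (i.e.\ $u_i = n_i$ and $v_i = 0$ simultaneously). The lemma follows if I can show every index in $A$ is larger (in the $n$-ordering sense, hence larger index) than every index in $B$ — then take $\tilde i = \max(B)$ if $B \neq \emptyset$, or $\tilde i = 0$ otherwise, and note indices in the ``neither'' category can be absorbed on either side.

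The key step is therefore: suppose toward contradiction that there exist $a \in A$ and $b \in B$ with $a < b$, i.e.\ $n_a > n_b$, source $s_a$ receives at least one relayed user (from some $s_j$, $j \neq a$, contributing a user $l$ routing via $(s_j, s_a, d)$) while source $s_b$ has a user $k \in N_b$ who routes indirectly via $(s_b, s_{b'}, d)$ for some $b' \neq b$. I then perform a \emph{swap} argument analogous to the one in Lemma~\ref{lem:both}: reroute the received user of $s_a$ and the departing user of $s_b$ so as to decrease $T_a$ and increase $T_b$, while keeping all other $T_w$ fixed, and argue this strictly increases $TR$. Concretely, the natural move is: let user $l$ (currently on IP $(s_j, s_a, d)$) instead go to IP $(s_j, s_b, d)$ — both are relayed flows with factor $\bar q$, so this transfers $\bar q\phi$ from $T_a$ to $T_b$; and simultaneously let user $k \in N_b$ (currently on IP $(s_b, s_{b'}, d)$) switch to DP $(s_b, d)$, transferring $\phi$ into $T_b$ and $-\bar q\phi$ out of $T_{b'}$. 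I must be careful with the bookkeeping when $b' = a$ or when $j = b$ etc., so I would set the move up to always be well-defined: the cleanest version is to move $k$ from $(s_b,s_{b'},d)$ onto DP $(s_b,d)$ (this only changes $T_b$ and $T_{b'}$), and move $l$ from $(s_j,s_a,d)$ onto $(s_j,s_{b'},d)$ (this only changes $T_a$ and $T_{b'}$), so the net effect is $T_b \uparrow$ by $\phi$, $T_a \downarrow$ by $\bar q\phi$, and $T_{b'}$ unchanged; all other traffic rates are untouched.

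The efficiency gain then reduces to showing $\frac{\mu T_b'}{T_b'+\mu} + \frac{\mu T_a'}{T_a'+\mu} > \frac{\mu T_b}{T_b+\mu} + \frac{\mu T_a}{T_a+\mu}$ where $T_b' = T_b + \phi$ and $T_a' = T_a - \bar q \phi$. This is where the ordering $n_a > n_b$ enters: I claim $T_a > T_b$ in the original optimal solution, because $s_a$ carries all $n_a$ of its own users (since $a \in A$, $u_a = n_a$) plus at least one relayed user, hence $T_a \ge (n_a + \bar q)\phi$, whereas $T_b = u_b\phi + v_b\bar q\phi = u_b\phi \le (n_b - 1)\phi < n_a\phi \le T_a$ (using $v_b = 0$, $u_b \le n_b - 1 < n_a$). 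Since $x \mapsto \frac{\mu x}{x+\mu}$ is strictly concave and increasing, and we are moving mass of total size $\bar q\phi$ ``down'' from the higher point $T_a$ (and adding a further $q\phi$ of new traffic to the already-lower point $T_b$), both the concavity rearrangement and the net traffic increase push $TR$ strictly up — concretely one writes $TR(\mathbf p') - TR(\mathbf p) = \big[g(T_b+\phi) - g(T_b)\big] - \big[g(T_a) - g(T_a - \bar q\phi)\big]$ with $g(x) = \frac{\mu x}{x+\mu}$, and since $g$ is strictly concave and increasing with $T_b < T_b + \phi$, $T_a - \bar q\phi < T_a$, and the interval $[T_b, T_b+\phi]$ lying entirely to the left of $[T_a - \bar q\phi, T_a]$ while being longer, the first bracket strictly exceeds the second. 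This contradicts optimality of $\mathbf p$, so no such pair $a < b$ exists, and the threshold $\tilde i$ exists as required.

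\textbf{Main obstacle.} The delicate part is not the concavity estimate but making the swap move genuinely well-defined and genuinely improving in \emph{every} configuration of the indices $a, b, b', j$ — in particular handling the cases where the relay source $j$ coincides with $b$, or where $b' = a$, so that a single rerouting affects two of the quantities I want to control. I expect to handle this by choosing the rerouting targets carefully (as sketched above, routing $l$ to $s_{b'}$ rather than to $s_b$ keeps $T_{b'}$ a wash and cleanly decouples the changes into ``$T_b$ up by $\phi$, $T_a$ down by $\bar q\phi$''), and by a short separate check that $s_a$ indeed has a relayed user coming from a source different from $s_b$, or, if the only relayed user at $s_a$ comes from $s_b$ itself, that an even simpler direct swap (send that user home to $s_b$'s DP) already strictly improves $TR$ by the same concavity argument, since it raises $T_b$ and lowers $T_a$ at once.
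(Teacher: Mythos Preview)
Your proposal is correct and follows essentially the same route as the paper: assume for contradiction there is a pair $a<b$ with $v_a>0$ and $u_b<n_b$, reroute the incoming IP user at $s_a$ to the destination $s_{b'}$ of the departing user at $s_b$, and send that departing user home to DP, so that only $T_a$ drops by $\bar q\phi$ and $T_b$ rises by $\phi$; then compare the two changed terms. The paper does the comparison by a direct inequality on $\frac{1}{T_i+\mu}+\frac{1}{T_j+\mu}$ using $n_a>n_b>u_b$, which is exactly your concavity argument for $g(x)=\mu x/(x+\mu)$ written out; your edge-case discussion (in particular the observation that $j=b'$ is excluded by Lemma~\ref{lem:both}, and that $b'=a$ or $k=l$ collapse to the same net change) is if anything more careful than the paper's brief remark that the two rerouted users ``may coincide.''
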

\begin{proof}
Given an optimal solution $\mathbf p$, suppose for contradiction that there exist $i,j\in[m]$ ($i<j$) such that $v_i>0$ and $u_j<n_j$. By Lemma \ref{lem:both}, we have $u_i=n_i$ and $v_j=0$.  There exists a source $s_{i'}$ and a user $k\in N_{i'}$  selecting IP $(s_{i'},s_i,d)$. There exists a source $s_{j'}$ ($j'\neq j$) and a user $k'\in N_j$ selecting IP $(s_j,s_{j'},d)$. Note that when $i'=j$ and $j'=i$, users $k$ and $k'$ may coincide. The total traffic rate is 
\begin{align*}
    TR(\mathbf p)&=\frac{\mu T_i}{T_i+\mu}+\frac{\mu T_j}{T_j+\mu}+\sum_{w\in [m]\backslash\{i,j\}}\left(\frac{\mu T_w}{T_w+\mu}\right)\\
    &=\mu(2-\frac{1}{T_i+\mu}-\frac{1}{T_j+\mu})+\sum_{w\in [m]\backslash\{i,j\}}\left(\frac{\mu T_w}{T_w+\mu}\right).
\end{align*}

Now we show that the total traffic rate can be improved by revising $\mathbf p$. Let user $k$ choose IP $(s_{i'},s_{j'},d)$ if $i'\neq j'$ and choose DP $(s_{i'},d)$ if $i'=j'$. Let user $k'\in N_j$ choose DP. Fixing all others’ strategies, denote the new strategy profile by $\mathbf p'$, and define $u_i',v_i'$ accordingly. Note that $v_i' \!=\! v_i-1$, $u_j' \!=\! u_j+1$, and $T_w(\mathbf p')=T_w(\mathbf p)$ for all other sources $s_w \!\neq\! s_i,s_j$. Since $i \!<\! j$, it follows that $n_i \!\ge\! n_j \!>\! u_j$. Therefore, we have
\begin{align*}
    &\frac{1}{T_i+\mu}+\frac{1}{T_j+\mu}=\frac{1}{n_i\phi+v_i\bar{q}\phi+\mu}+\frac{1}{u_j\phi+\mu}\\
    >~&\frac{1}{n_i\phi+v_i'\bar{q}\phi+\mu}+\frac{1}{u_j'\phi+\mu}
    =\frac{1}{T_i'+\mu}+\frac{1}{T_j'+\mu},
\end{align*}
which indicates that $TR(\mathbf p)<TR(\mathbf p')$, a contradiction.
\end{proof}

Lemma~\ref{lem:opt} shows that there exists a threshold $\tilde i$:
1) if $i>\tilde i$, all users from $s_i$ chose DP;
2) if $i\leq\tilde i$,  portion users chose DP, and portion users chose IP. 

Now we are ready to present Algorithm~\ref{alg:heuristic}. The main idea is searching for $\tilde i$ in Lemma \ref{lem:opt}. For each candidate of $\tilde i$, let $B$ be the number of users selecting IP, all of whom come from $L\!=\!\{s_l~|~l\le \tilde i\}$, and go to $R\!=\!\{s_j~|~j>\tilde i\}$. For every possible value of $B$, we compute the best possible way for extracting the $B$ users from $L$ and distributing them over $R$. 








\begin{algorithm}
	\caption{\hspace{-4pt}{ \bf Computing an optimal solution.}}
	\label{alg:heuristic}
	\begin{algorithmic}[1] 
	\REQUIRE $m$ source with $n_1\ge n_2\ge\cdots\ge n_m$, $\phi,\mu,q$ 
	\ENSURE $(u^*_i,v^*_i)_{i\in N}$
	\STATE Initialize $u_i=v_i=0$ for all $i\in[m]$. $TR^*=0$
	\FOR{$\tilde i=1,\ldots,m$}
	\STATE Let $v_l=0$ for all $l=1,2,\ldots,\tilde i$
	\STATE Let $u_j=n_j$ for all $j=\tilde i+1,\ldots,m$
	\FOR{$B=1,2,\ldots,\sum_{l=1}^{\tilde i}n_l$}
	\STATE (a) Compute $(u_l)_{l\in[\tilde i]}$ s.t. $\sum_{l\!=\!1}^{\tilde i}(n_l\!-\!u_l)\!=\!B$ \\and the values of $u_l$ are as equal as possible.
	\STATE (b) Compute $(v_j)_{j>\tilde i}$ s.t. $\sum_{j=\tilde i+1}^mv_j\!=\!B$ and the values of $n_j\phi\!+\!v_j\bar{q}\phi\!+\!\mu$ are as equal as possible.
	\STATE (c) Compute $TR$ with respect to $(u_i,v_i)_{i\in N}$
	\IF{$TR>TR^*$}
	    \STATE $TR^*\leftarrow TR$, and $(u^*_i,v^*_i)_{i\in N}\leftarrow (u_i,v_i)_{i\in N}$
	\ENDIF
	\ENDFOR
	\ENDFOR
	\end{algorithmic}
\end{algorithm}

In Algorithm~\ref{alg:heuristic}, step (a) is to make $T_l$ (and thus no congestion probability $\frac{\mu}{T_l+\mu}$) as equal as possible for $l\!\in\![\tilde i]$. This can be realized by initializing $u_l\!=\!n_l$, and then removing players one by one from the highest $u_l$ and updating until $B$ players have been removed. The goal of step (b) is to make $T_j$ (and thus $\frac{\mu}{T_j+\mu}$) as even as possible. This can be realized by initializing $v_j=0$,  then adding users one by one to $v_{j'}=\arg\min_{j>\tilde i}n_j\phi+v_j\bar{q}\phi+\mu$ and updating, until $B$ players have been added. These two steps guarantee that the $B$ loads are distributed in an optimal way to maximize the traffic rate.

\begin{theorem}\label{thm:main}
Algorithm \ref{alg:heuristic} returns an optimal solution for maximizing the total traffic, and runs in $O(mn^2)$ time.
\end{theorem}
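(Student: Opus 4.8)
The plan is to establish correctness and running time separately. For correctness, I would argue that Algorithm~\ref{alg:heuristic} exhaustively searches a family of configurations that, by Lemmas~\ref{lem:both} and~\ref{lem:opt}, is guaranteed to contain an optimal one, and that for each fixed ``shape'' of configuration the algorithm picks the best instantiation. Concretely: Lemma~\ref{lem:opt} tells us that some optimal solution has a threshold index $\tilde i$ with $v_l=0$ for $l\le\tilde i$ and $u_j=n_j$ for $j>\tilde i$; the outer loop tries every value of $\tilde i\in[m]$, so the optimal threshold is reached on some iteration. Having fixed $\tilde i$, the only remaining freedom is: how many users $B$ are relayed (all from $L=\{s_l:l\le\tilde i\}$ into $R=\{s_j:j>\tilde i\}$), how the $B$ departures are distributed among the sources in $L$, and how the $B$ arrivals are distributed among the sources in $R$. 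The middle loop tries every $B\in\{1,\dots,\sum_{l\le\tilde i}n_l\}$ (plus the $B=0$ case, which is the initialization), so it suffices to show steps (a) and (b) compute the traffic-maximizing distribution for each fixed $(\tilde i,B)$.

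For the optimality of steps (a) and (b): recall $TR(\mathbf p)=\mu\big(m-\sum_{i\in[m]}\frac{\mu}{T_i+\mu}\big)$, so maximizing $TR$ is equivalent to minimizing $\sum_i \frac{1}{T_i+\mu}$. This decouples over $L$ and $R$ once $\tilde i$ and $B$ are fixed, since for $l\le\tilde i$ we have $T_l=u_l\phi$ (as $v_l=0$) with $\sum_{l\le\tilde i}u_l = \sum_{l\le\tilde i}n_l - B$, and for $j>\tilde i$ we have $T_j=n_j\phi+v_j\bar q\phi$ with $\sum_{j>\tilde i}v_j=B$. Minimizing $\sum 1/(T+\mu)$ over a fixed sum of $T$'s is a standard convexity fact: because $x\mapsto 1/(x+\mu)$ is strictly convex and decreasing, given the constraint that the $u_l$ are nonnegative integers with $u_l\le n_l$ (the cap comes from $n_l-u_l\ge0$), the minimizer is the ``as balanced as possible'' assignment subject to the caps — precisely the greedy rule described after the algorithm (start from $u_l=n_l$, repeatedly decrement the current largest $u_l$). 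I would make this rigorous via an exchange argument: if two values $u_a, u_b$ differ by at least $2$ and both are strictly between their bounds, shifting one unit from the larger to the smaller strictly decreases $\sum 1/(u_l\phi+\mu)$ by strict convexity; iterate until no such pair exists, which is exactly the balanced-with-caps configuration. The same argument handles step (b) with the quantities $n_j\phi+v_j\bar q\phi+\mu$ in place of $u_l\phi+\mu$ (here the increment per added relayed user is $\bar q\phi$, so ``balancing the values $n_j\phi+v_j\bar q\phi+\mu$'' is the right target, and there is no upper cap on $v_j$). One subtlety to note: the exchange argument for (b) must be done on the real-valued relaxation and then the integrality handled by the greedy ``add to the current minimum'' rule, which is standard; I would state this carefully since the $n_j$ are generally unequal so the balanced point is not simply $v_j$ all equal.

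For the running time: the outer loop runs $m$ times; the inner loop runs at most $n$ times; inside, step (a) can be maintained incrementally (one decrement operation per increase of $B$, with a priority queue or even a simple scan costing $O(m)$), step (b) likewise costs $O(m)$ per iteration, and step (c) evaluates $TR$ in $O(m)$ time via~\eqref{equ:29}. This gives $O(m\cdot n\cdot m)=O(m^2 n)$, which is $O(mn^2)$ since $m\le n$; alternatively one bounds it directly as $O(mn^2)$ by the crude estimate $m\le n$ on the per-step cost. I would also remark that the output $(u_i^*,v_i^*)$ determines a strategy profile up to relabeling users, so returning the counts is without loss of generality.

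The main obstacle I expect is not any single estimate but the bookkeeping in the optimality proof of steps (a)--(b): one must be careful that (i) fixing $\tilde i$ and $B$ really does decouple the optimization into the $L$-part and the $R$-part with no interaction, (ii) the upper caps $u_l\le n_l$ in step (a) are respected by the exchange argument (an unconstrained balancing could violate them), and (iii) the ``balance the shifted quantities $n_j\phi+v_j\bar q\phi+\mu$'' in step (b) is the correct surrogate for balancing the $T_j$, given the unequal baselines $n_j$. Handling the integrality (the $u_l,v_j$ must be integers, so perfect balance is generally unattainable and one settles for values differing by at most one) via the greedy rules is routine but needs to be stated rather than waved away.
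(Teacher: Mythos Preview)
Your proposal is correct and follows essentially the same approach as the paper's proof: use Lemma~\ref{lem:opt} to restrict attention to threshold configurations, enumerate $\tilde i$ and $B$, and for each pair argue that the balancing rules in steps~(a)--(b) are optimal, then tally the running time. In fact you supply considerably more than the paper does: the paper's proof simply asserts that steps~(a) and~(b) ``distribute in an optimal way'' and that each inner iteration costs $O(n)$, whereas you actually justify the former via the convexity of $x\mapsto 1/(x+\mu)$ and an exchange argument, note the decoupling of the $L$- and $R$-subproblems, flag the capping constraint $u_l\le n_l$, and derive the $O(m^2n)\le O(mn^2)$ bound via incremental updates. One small inaccuracy: the $B=0$ case is not actually handled by the initialization as written (the algorithm initializes $TR^*=0$, not to the all-DP traffic), so strictly speaking the all-DP profile is never compared; this is a trivial bug in the pseudocode rather than in your argument, but you should state the fix explicitly rather than assume it is covered.
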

\begin{proof}
In the first loop, we traverse all indexes in $[m]$ to find the $\tilde i$ in Lemma \ref{lem:opt}. In the second loop, we traverse all possible numbers of users who select IP, and given any such a number  $B$, we extract the $B$ users from $\{s_l~|~l\le \tilde i\}$  and distribute them over $\{s_j~|~j>\tilde i\}$ in an optimal way to maximize the traffic rate. So all possible optimal solutions have been searched by the algorithm, giving the optimality. 
For the time complexity, we have $m$ iterations in the first loop, at most $n$ iterations in the second loop, and the time for each iteration is $O(n)$. 
\end{proof}

Intuitively, when the transmission loss probability is sufficiently large, all packets should go through DP; when there is no transmission loss, the load of packets should be distributed evenly over all sources. We verify the intuition as follows.
\begin{corollary}
If $q=1$, the unique optimal solution is that all users choose DP (i.e., $u_i=n_i, v_i=0, \forall i\in M$). If $q=0$, a strategy profile $\mathbf p$ is optimal if and only if $|y_i-y_j|\le 1$ for all $i,j\in [m]$. 
\end{corollary}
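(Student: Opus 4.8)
The plan is to handle the two cases $q=1$ and $q=0$ separately, in each case leaning on the structural facts already established in Lemmas~\ref{lem:both} and~\ref{lem:opt} and on the monotonicity/convexity of the map $T\mapsto \frac{\mu T}{T+\mu}=\mu-\frac{\mu^2}{T+\mu}$.

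For $q=1$ (equivalently $\bar q=0$), I would argue directly from the traffic formula~(\ref{eq:ti}): when $\bar q=0$, any user routing through IP $(s_i,s_j,d)$ contributes $\bar q\phi=0$ to $T_j$, so such a user's packets are lost with probability $1$ on the side link and add nothing to any $T_w$. Hence for a profile with $v_i$ users on IP into $s_i$ and $u_i$ on DP, we have $T_i=u_i\phi\le n_i\phi$, with equality iff $u_i=n_i$. Since $TR(\mathbf p)=\sum_i \mu T_i/(T_i+\mu)$ and each summand is strictly increasing in $T_i$, the total traffic is maximized exactly when $T_i=n_i\phi$ for every $i$, i.e. $u_i=n_i$ and $v_i=0$ for all $i$; strict monotonicity gives uniqueness. (If one wants to be careful about what "$v_i=0$ for all $i$" means as a strategy profile: every user of $s_i$ must pick DP, since any IP choice strictly decreases some $T_w$ below its maximum.)

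For $q=0$ ($\bar q=1$), the side links are lossless, so a user at $s_i$ routing to $s_j$ contributes a full $\phi$ to $T_j$ exactly as a native DP user of $s_j$ would. Thus $T_i=y_i\phi$ depends only on $y_i=u_i+v_i$, and $TR(\mathbf p)=\sum_{i\in[m]}\big(\mu-\frac{\mu^2}{y_i\phi+\mu}\big)$, subject to $\sum_i y_i=n$ and the feasibility constraint that the $y_i$ arise from some routing — but with $\bar q=1$ \emph{every} nonnegative integer vector $(y_i)$ with $\sum_i y_i=n$ is realizable (route users arbitrarily), so the problem reduces to: minimize $\sum_i \frac{1}{y_i\phi+\mu}$ over nonnegative integers summing to $n$. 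Since $x\mapsto \frac{1}{x\phi+\mu}$ is strictly convex, a standard exchange argument shows the minimum is attained precisely at the "balanced" vectors, i.e. those with $\max_i y_i-\min_i y_i\le 1$: if some $y_i\ge y_j+2$, moving one unit from $i$ to $j$ strictly decreases the sum (by strict convexity, $\frac{1}{(y_i-1)\phi+\mu}+\frac{1}{(y_j+1)\phi+\mu}<\frac{1}{y_i\phi+\mu}+\frac{1}{y_j\phi+\mu}$), so no optimum can be unbalanced; conversely all balanced vectors give the same objective value and hence are all optimal. This is exactly the condition $|y_i-y_j|\le 1$ for all $i,j$.

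The main obstacle — really the only subtlety — is the realizability direction in the $q=0$ case: one must check that any target $(y_i)$ with $\sum y_i=n$ can indeed be produced by a legal assignment of the $n_i$ users of each $s_i$ to routes (no user is forced to stay at its own source, and a user may only go one hop). Since a user of $s_i$ can be sent to any $s_j$, and there are no capacity constraints, this is immediate: e.g. think of it as distributing $n$ indistinguishable tokens into $m$ bins with the prescribed counts, then assigning the physical users to fill those slots in any order. I would state this explicitly but keep it brief. Everything else is the elementary monotonicity argument ($q=1$) and the convex exchange argument ($q=0$), both of which are routine once the reduction $TR=\sum_i(\mu-\mu^2/(T_i+\mu))$ with $T_i=y_i\phi$ is in place.
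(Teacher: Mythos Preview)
Your proposal is correct and follows essentially the same route as the paper: for $q=1$ you use that $T_i=u_i\phi$ and the strict monotonicity of $T\mapsto \mu T/(T+\mu)$, and for $q=0$ you reduce to $T_i=y_i\phi$ and apply the same exchange/convexity argument the paper uses (moving one unit from an overloaded $y_i$ to an underloaded $y_j$ strictly increases $TR$). Your version is slightly more complete than the paper's in that you spell out the ``if'' direction for $q=0$ (all balanced vectors have the same objective value) and the realizability of any target vector $(y_i)$, both of which the paper leaves implicit.
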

\begin{proof}
If $q=1$,  $TR=\sum_{i\in[m]}\frac{\mu u_i\phi}{u_i\phi+\mu}$ is increasing with respect to every $u_i$. By the monotonicity,  the optimum is achieved when $u_i=n_i$. If $q=0$, suppose for contradiction that there exist $i,j\in [m]$ in an optimal solution $\mathbf p$ such that $y_i-y_j\ge 2$. The total traffic rate is $TR=\mu(m-\sum_{k\in [m]}\frac{\mu}{y_k\phi+\mu})$. Consider a new strategy profile $\mathbf p'$ with $y_i'=y_i-1,y_j'=y_j+1$, i.e., a user who chooses source $s_i$ deviates to $s_j$.  Then the total traffic rate becomes $TR'=\mu(m-\sum_{k\in[m]\backslash\{i,j\}}\frac{\mu}{y_k\phi+\mu}-\frac{\mu}{y_i'\phi+\mu}-\frac{\mu}{y_j'\phi+\mu})>TR$, a contradiction.
\end{proof}

\section{Decentralized Analysis}\label{sec:ne_poa}
In this section, we study the Nash equilibria in the decentralized decision-making scenario where each user makes a decision on the choice of DP or IP. 

 \vspace{-2mm}

\subsection{Characterization of NEs}\label{subsec:ne}

A NE should satisfy that: for a user selecting DP, its loss rate will not decrease if it deviates to any IP; for a user selecting IP, its loss rate will not decrease if it deviates to DP or another IP.
We formalize it as the following characterization.

\begin{theorem}\label{theo:msps}
Given an arbitrary strategy profile $\mathbf p$ with $(u_i,v_i)_{i\in [m]}$, let $i^*\in\arg\min_{i\in[m]}\{u_i+v_i\bar{q}\}$, and let $x_{ij}\in\{0,1\}$ be an indicator where $x_{ij}=1$ if there exists at least one user selecting IP $(s_i,s_j,d)$. Then, $\mathbf p$ is a NE, if and only if the following conditions are satisfied: 
\begin{itemize}
    \item[(i)] for all $i\in[m]$ with $u_i>0$, 
    we have
     \vspace{-2mm}
    \begin{equation}\label{con1}
   \bar{q}(u_i+v_i\bar{q})\le u_{i^*}+v_{i^*}\bar{q}+\bar{q}+\frac{q\mu}{\phi};
    \end{equation}
    \item[(ii)] for all $i,l\in[m]$ with $x_{il}=1$, we have
      \begin{align}\label{con2}
          u_l \!+\! v_l\bar{q} \leq  \min \!\left\{\! \bar{q}(\! u_i \!+\! 1 \!+\! v_i\bar{q}) \!-\! \frac{q\mu}{\phi}, u_{i^*} \!+\! v_{i^*}\bar{q} \!+\! \bar{q} \!\right\}
      \end{align}
\end{itemize}
\end{theorem}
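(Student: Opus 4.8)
The approach is to translate the Nash equilibrium condition of Definition~\ref{def:ne} directly into inequalities on the load counts $(u_i,v_i)$, exploiting the fact that every user's loss rate depends on $\mathbf p$ only through the traffic rates $T_1,\ldots,T_m$, which in turn are affine functions of the counts $(u_i,v_i)$. The first step is to record the loss rate of a generic user under a pure profile: a user $k\in N_i$ routing through DP has loss rate $\frac{\phi T_i}{T_i+\mu}$, and a user routing through IP $(s_i,s_j,d)$ has loss rate $\phi\big(q+\bar q\frac{T_j}{T_j+\mu}\big)$; here $T_j=(u_j+v_j\bar q)\phi$. It is convenient to introduce the shorthand $a_i:=u_i+v_i\bar q$ so that $T_i=a_i\phi$ and the no-congestion probability on $(s_i,d)$ is $\frac{\mu}{a_i\phi+\mu}$. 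Note that $i^*\in\arg\min_i a_i$ is precisely the most attractive source to deviate \emph{into}, because loss rate through a direct link at a source is increasing in that source's load.

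The second step is to write down, for each of the three deviation types in the NE definition, what "no profitable deviation" means. (a) For a user currently on DP at $s_i$ (requires $u_i>0$): deviating to IP $(s_i,s_j,d)$ changes $s_i$'s load from $a_i$ to $a_i-1$ and $s_j$'s load from $a_j$ to $a_j+\bar q$; the no-improvement inequality, after clearing the denominators $a_i\phi+\mu$ and $(a_j+\bar q)\phi+\mu$, must hold for the best possible target $j$, i.e. the one minimizing the post-deviation load, which is $i^*$ (one must check the deviating user is not itself already the unique load at $s_{i^*}$, a boundary case handled by the $+\bar q$ slack term). (b) For a user currently on IP $(s_i,s_l,d)$ (requires $x_{il}=1$): deviating to DP at $s_i$ raises $a_i$ to $a_i+1$ while lowering $a_l$ to $a_l-\bar q$; no-improvement gives $u_l+v_l\bar q\le \bar q(u_i+1+v_i\bar q)-\frac{q\mu}{\phi}$. (c) For the same IP user deviating to another IP $(s_i,s_j,d)$: this lowers $a_l$ by $\bar q$ and raises $a_j$ by $\bar q$; the best target is again $i^*$, yielding $u_l+v_l\bar q\le u_{i^*}+v_{i^*}\bar q+\bar q$. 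Taking the minimum of the bounds from (b) and (c) produces exactly \eqref{con2}, and the bound from (a) rearranges to \eqref{con1}.

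The third step is the routine-but-careful algebra: each ``loss rate does not decrease'' statement is of the form $\frac{A}{A+\mu}\le q+\bar q\frac{B}{B+\mu}$ or $q+\bar q\frac{B}{B+\mu}\le \frac{A'}{A'+\mu}$ with $A,A',B$ being load terms times $\phi$; since all these quantities are positive, cross-multiplying is safe and, after cancellation, each reduces to a \emph{linear} inequality in the $a_i$'s, which is where the terms $\bar q$, $\frac{q\mu}{\phi}$, and the $\bar q(\cdot)$ coefficients come from. I would do this reduction once in a ``key identity'' lemma-style computation and then reuse it for all three cases. For the converse direction, I would observe that the manipulations are all reversible: if \eqref{con1} and \eqref{con2} hold for the extremal index $i^*$ then they hold for every feasible target index (since $a_{i^*}\le a_j$ for all $j$), so no deviation of any of the three types is profitable, hence $\mathbf p$ is a NE.

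The main obstacle I anticipate is the boundary bookkeeping, not the algebra: when a user deviates, it leaves its current source and joins another, so the ``before'' and ``after'' loads must be computed with that user removed from the source it is leaving, and when the target source is $s_{i^*}$ itself (or when $i'=j'$ collapses an IP into a DP, as in the degenerate case flagged in Lemma~\ref{lem:opt}'s proof) one must verify the relevant inequality still has the claimed form; this is exactly what the extra additive $\bar q$ slack in \eqref{con1} and in the second branch of \eqref{con2} absorbs, and making that precise is the delicate part. A secondary point to get right is that condition~(i) is stated for the source a DP-user sits on while the comparison is against $i^*$, so one must argue that comparing only against $i^*$ (rather than all $j\ne i$) loses nothing, which follows from monotonicity of $\frac{B}{B+\mu}$ in $B$.
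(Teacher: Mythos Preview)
Your plan is correct and follows essentially the same route as the paper: analyze the three deviation types (DP$\to$IP, IP$\to$DP, IP$\to$IP), clear denominators to obtain linear inequalities in the loads $a_i=u_i+v_i\bar q$, and reduce the ``for all $j$'' constraints to a single comparison against $i^*$ by monotonicity. If anything, you are more careful than the paper, which writes only the necessity direction explicitly and leaves the converse and the $i^*\in\{i,l\}$ boundary cases implicit; your observation that the IP$\to$DP bound already implies $a_l\le a_i+\bar q$ (so the second branch of \eqref{con2} is not violated when $i^*=i$) is exactly the missing check.
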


\begin{proof}
    Suppose $\mathbf p$ is a NE. Consider an arbitrary source $s_i\in S$ and arbitrary user $k\in N_i$.

    \textbf{Case 1.} User $k$ selects DP in $\mathbf p$ (denoted as $\mathbf{p}_k^{(i)}$ where $p_{ki}^{(i)}=1$).  If it deviates to IP $(s_i,s_j,d)$ where $j \neq i$ (denoted as $\mathbf{p'}_k^{(i)}$ where $p_{kj}^{i}=1$), by Definition \ref{def:ne}, we have $LR_k(\mathbf p_k^{(i)},\mathbf p_{-k})\le LR_k(\mathbf {p'}_k^{(i)},\mathbf p_{-k})$. It is equivalent to 
    \begin{align*}
    &   1\!-\!\frac{\mu}{u_i\phi+v_i\bar{q}\phi+\mu}\!\le \! q+\bar{q}\cdot\left(1\!-\!\frac{\mu}{u_j\phi+(v_j+1)\bar{q}\phi+\mu}\right)\\
  \Leftrightarrow &~ \frac{\bar{q}}{u_j\phi+(v_j+1)\bar{q}\phi+\mu}\le \frac{1}{u_i\phi+v_i\bar{q}\phi+\mu}\\
    \Leftrightarrow &~ \bar{q}(u_i+v_i\bar{q})-\frac{q\mu}{\phi}\le u_j+(v_j+1)\bar{q},
\end{align*}
 The above inequality should hold for all $j\neq i$, and thus is equivalent to  Equation (\ref{con1}). 

    \textbf{Case 2.} User $k$ selects IP $(s_i,s_l,d)$ in $\mathbf p$. If it deviates to DP $(s_i,d)$, by Definition \ref{def:ne}, we should have $LR_k(\mathbf p_k^{(i)},\mathbf p_{-k})\le LR_k(\mathbf{p'}_k^{(i)},\mathbf p_{-k})$. It is equivalent to 
\begin{align*}
  &~ q\!+\!\bar{q}\cdot\left(1\!-\!\frac{\mu}{u_l\phi+v_l\bar{q}\phi+\mu}\right)\le 1\!-\! \frac{\mu}{(u_i+1)\phi+v_i\bar{q}\phi+\mu}\\
  \Leftrightarrow &~ \frac{1}{(u_i+1)\phi+v_i\bar{q}\phi+\mu}\le \frac{\bar{q}}{u_l\phi+v_l\bar{q}\phi+\mu}\\
    \Leftrightarrow &~ u_l+v_l\bar{q}\le \bar{q}(u_i+1+v_i\bar{q})-\frac{q\mu}{\phi}. 
\end{align*}
 Moreover, a NE must guarantee that user $k$ will  not deviate to another IP $(s_i,s_j,0)$, and thus we should have
    \begin{align*}
&~ 1-\frac{\mu}{u_l\phi+v_l\bar{q}\phi+\mu}\le 1-\frac{\mu}{u_j\phi+(v_j+1)\bar{q}\phi+\mu}\\
  \Leftrightarrow & ~u_l+v_l\bar{q}\le u_j+(v_j+1)\bar{q}.
\end{align*}
 Note that the above inequality should hold for all $j\neq i,l$. Therefore, we obtain Equation  \eqref{con2}.
\end{proof}

\subsection{Price of Anarchy}\label{subsec:poa}
We investigate the price of anarchy in this section, which measures the efficiency of NE. We  give an upper bound on the optimal total traffic rate, and  a lower bound on the total traffic rate of any NE.

\begin{lemma}\label{lem:kdd}
In an optimal solution $\mathbf p$, 
the total traffic rate is $TR_{\text{}}(\mathbf p)\le \mu m \left(1-\frac{\mu}{n_1\phi+\mu}\right)$.
\end{lemma}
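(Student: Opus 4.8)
The claim is that in an optimal solution, $TR(\mathbf p) \le \mu m\left(1-\frac{\mu}{n_1\phi+\mu}\right)$. My plan is to start from the expression \eqref{equ:29}, namely $TR(\mathbf p) = \mu\left[m - \sum_{i\in[m]}\frac{\mu}{T_i+\mu}\right]$, and obtain a \emph{lower} bound on $\sum_{i\in[m]}\frac{\mu}{T_i+\mu}$, which will translate into the desired upper bound on $TR$. Since $\frac{\mu}{T_i+\mu}$ is decreasing in $T_i$, bounding the sum from below amounts to bounding each $T_i$ from above.

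The key observation is that for any source $s_i$, the traffic rate $T_i = u_i\phi + v_i\bar q\phi$ cannot exceed $n_1\phi$. Indeed, $u_i \le n_i \le n_1$ users route directly through $s_i$, contributing $u_i\phi \le n_1\phi$; and any user routing indirectly through $s_i$ contributes only $\bar q\phi < \phi$ to $T_i$ while ``using up'' one unit of direct-path capacity at its own home source — more carefully, by Lemma \ref{lem:both} we know that if $v_i > 0$ then $u_i = n_i$, and the indirect users arriving at $s_i$ come from sources with fewer users. But the cleanest route is to argue directly: in an optimal solution no single $T_i$ can exceed $n_1\phi$, because otherwise congestion loss at $s_i$ would be severe and one could reroute an offending user to lower $T_i$ and raise $TR$ (this is essentially the balancing logic already used in the proofs of Lemmas \ref{lem:both} and \ref{lem:opt}). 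Granting $T_i \le n_1\phi$ for every $i$, monotonicity gives $\frac{\mu}{T_i+\mu} \ge \frac{\mu}{n_1\phi+\mu}$, hence $\sum_{i\in[m]}\frac{\mu}{T_i+\mu} \ge \frac{m\mu}{n_1\phi+\mu}$, and substituting into \eqref{equ:29} yields exactly $TR(\mathbf p) \le \mu m\left(1 - \frac{\mu}{n_1\phi+\mu}\right)$.

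The main obstacle I anticipate is rigorously justifying $T_i \le n_1\phi$ in an optimal solution. One has to rule out the scenario where a source $s_i$ receives so many indirect users that $u_i\phi + v_i\bar q\phi > n_1\phi$. I would handle this by a swap argument: if $T_i > n_1\phi$ then $v_i > 0$ (since $u_i\phi \le n_i\phi \le n_1\phi$), so some user $l$ at another source routes indirectly through $s_i$; by Lemma \ref{lem:both}, $s_i$ has $u_i = n_i$, so the total number of users "piled onto" $s_i$ is large. Since $\sum_i y_i = n$ and there are $m$ sources, some source $s_j$ has $T_j$ small; rerouting user $l$ to take the direct path at its home source (or an indirect path to a lightly loaded source) strictly decreases $\sum_i \frac{\mu}{T_i+\mu}$ by the same convexity computation as in Lemma \ref{lem:opt}, contradicting optimality. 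I should double-check that the $\bar q$ weighting does not break this — but since rerouting an indirect user to a direct path only increases the contributed rate, and moving load off an overloaded source strictly helps, the inequality goes through. Alternatively, since the lemma only needs an \emph{upper} bound on the optimum and the optimum is at most $\mu m$ trivially, a slightly weaker but fully elementary argument (e.g. bounding via $T_i \le n_1 \phi$ using only $u_i \le n_1$ and Lemma \ref{lem:both} to control $v_i$) may suffice, and I would fall back on that if the swap argument gets delicate.
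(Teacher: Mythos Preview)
Your plan is correct and matches the paper's own proof: both reduce to showing $T_i\le n_1\phi$ for every $i$ in an optimal solution, and both establish this by a swap argument (if $T_i>n_1\phi$ then $v_i>0$, so some user $k\in N_j$ routes indirectly through $s_i$; by Lemma~\ref{lem:both} one has $v_j=0$, hence $T_j=u_j\phi<n_j\phi\le n_1\phi<T_i$, and rerouting $k$ to its own DP strictly increases $TR$ by convexity). The only refinement you should make is to drop the averaging detour (``some source has small $T_j$'') and use directly that the home source $s_j$ of the offending indirect user is already lightly loaded, exactly as the paper does.
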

\begin{proof}
Let $i$ be the index stated in Lemma \ref{lem:opt}. It suffices to show  with the proof by contradiction that in the optimal solution $\mathbf{p}$, $u_i+v_i\bar{q}\le n_i$. First, for $i=1$, we have $v_1\!=\!0$, and thus it satisfies $u_1+v_1\bar{q}\!=\!u_1\le n_1$. For any $i>1$, suppose for contradiction that $u_i+v_i\bar{q}>n_i$. Then $v_i>0$, and there exists a source $s_j$ and a user $k\in N_j$ that chooses the IP $(s_j,s_i,d)$, i.e., $u_j<n_j$. By Lemma \ref{lem:both}, it must be $v_j=0$, and thus $T_j=u_j\phi$. Denote the strategy as $\mathbf{p}$ with $p_{ki}^{(j)}=1$.
The total traffic rate is
\begin{align}
    TR(\mathbf p)=\mu \left( m - \frac{\mu}{T_j + \mu} - \frac{\mu}{T_i + \mu} -  \sum_{w\in[m]\backslash\{j\}} \frac{\mu}{T_w + \mu} \right)
\end{align}
We show that the total traffic rate can be improved with user $k \in N_j$ deviating from IP $(s_j, s_i, d)$ to DP $(s_j,d)$.
Fixing the strategies of all others, denote by $\mathbf p'$ the new strategy profile, and define $(u'_w,v'_w,T'_w)_{w\in[m]}$ accordingly. Note that $u_j'=u_j+1$, $v_i'=v_i-1$, $u_i'=u_i$, and $T_w'=T_w$ for any $w\in[m]\backslash\{j\}$. Since $u_i+v_i\bar{q}>n_1\ge n_j\ge u_j$, we have  
\begin{align*}
    &\frac{1}{T_j+\mu}+\frac{1}{T_i+\mu}=\frac{1}{u_j\phi+\mu}+\frac{1}{u_i\phi+v_i\bar{q}\phi+\mu}\\
    > &~\frac{1}{u_j'\phi+\mu}+\frac{1}{u_i'\phi+v_i'\bar{q}\phi+\mu}= \frac{1}{T_j'+\mu}+\frac{1}{T_i'+\mu}. 
\end{align*}
It indicates that $TR(\mathbf p')\!>\!TR(\mathbf p)$ is a contradiction. Consequently,  $u_i+v_i\bar{q}\!\leq\! n_i \!\leq \!n_1$. According \eqref{equ:29}, we have  $TR_{\text{}}(\mathbf p)\le \mu m(1-\frac{\mu}{n_1\phi+\mu})$.
\end{proof}

\begin{lemma}\label{lem:kgg}
Let $z=\min\{n_m,\frac{n}{4m}-\bar{q}-\frac{q\mu}{\phi}\}$. For every NE $\mathbf p$, the total traffic rate satisfies $TR(\mathbf p)\ge \mu(m-\frac{m\mu}{z\phi+\mu})$.
\end{lemma}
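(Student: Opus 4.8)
The plan is to lower-bound the total traffic rate at any NE by showing that every link $(s_i,d)$ carries enough traffic, specifically that $u_i+v_i\bar q \ge z$ for all $i\in[m]$, where $z=\min\{n_m,\frac{n}{4m}-\bar q-\frac{q\mu}{\phi}\}$. Once that is established, since $TR(\mathbf p)=\mu\sum_{i\in[m]}\bigl(1-\frac{\mu}{T_i+\mu}\bigr)$ with $T_i=(u_i+v_i\bar q)\phi$ and each summand is increasing in $T_i$, we immediately get $TR(\mathbf p)\ge \mu\bigl(m-\frac{m\mu}{z\phi+\mu}\bigr)$, which is the claim. So the whole proof reduces to the load lower bound on the least-loaded link.

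First I would let $i^*\in\arg\min_{i\in[m]}\{u_i+v_i\bar q\}$ as in Theorem~\ref{theo:msps}, and write $w^*=u_{i^*}+v_{i^*}\bar q$ for the minimum load; the goal is $w^*\ge z$. I would split into cases according to whether $u_{i^*}>0$ or $u_{i^*}=0$. In the first case, condition (i) of Theorem~\ref{theo:msps} applied to $i=i^*$ gives $\bar q\,w^* \le w^* + \bar q + \frac{q\mu}{\phi}$, i.e. $(1-\bar q)w^*\le \bar q+\frac{q\mu}{\phi}$, so $q w^*\le \bar q+\frac{q\mu}{\phi}$ — wait, this only bounds $w^*$ from above, which is not directly useful, so the substance must come from a counting/averaging argument combined with the NE inequalities: the NE conditions force the loads on different links to be close to each other (no link can be much more loaded than the least-loaded one), and since the total number of users is $n$ and there are $m$ links, the average load is at least $n/m$ up to the loss factor, forcing even the minimum to be at least roughly $n/(4m)$ minus lower-order terms.

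Concretely, the key step is: using conditions (i) and (ii), show that for every $i$ with $u_i>0$, $u_i+v_i\bar q \le 4(w^*+\bar q+\frac{q\mu}{\phi})$ or a similar linear bound (the factor $4$ is where the $4m$ in $z$ comes from), and similarly that every IP-destination link is not too heavily loaded relative to $w^*$. Then I would sum $\sum_i (u_i+v_i\bar q)$. On one hand this sum is at least $\bar q\cdot(\text{total users routed})$; more carefully, $\sum_i u_i + \bar q\sum_i v_i$, and since $\sum_i u_i + \sum_i v_i = n$ with $\sum_i v_i$ being the total number of IP users, the worst case $\sum_i v_i = n$ gives $\sum_i(u_i+v_i\bar q)\ge \bar q n$ — but I suspect the intended bound uses that a NE cannot have too many IP users, or simply absorbs constants; in any case $\sum_i(u_i+v_i\bar q)\ge n - q n \ge $ something linear in $n$. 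On the other hand, $\sum_i(u_i+v_i\bar q) \le m\cdot 4(w^*+\bar q+\frac{q\mu}{\phi})$ by the per-link upper bound, since every link with positive total load is either a DP link ($u_i>0$) or an IP-target link (covered by (ii)). Combining, $w^*+\bar q+\frac{q\mu}{\phi} \ge \frac{n}{4m}$, hence $w^*\ge \frac{n}{4m}-\bar q-\frac{q\mu}{\phi}$; and trivially $w^*\ge n_m$ fails in general, so actually $w^* \ge \min\{\cdot\}$ must come from a separate easy observation that $w^*\ge n_m$ is NOT what we want — rather $z$ is defined as a min precisely because when $\frac{n}{4m}-\bar q-\frac{q\mu}{\phi}$ exceeds $n_m$ the relevant bound on $w^*$ is instead governed by comparison with source $s_m$ (which has $n_m$ users, all of whom could stay on DP, giving $u_m+v_m\bar q\ge$ something comparable to $n_m$ when $s_m$'s own users do not all flee).

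The main obstacle I anticipate is establishing the per-link upper bound $u_i+v_i\bar q \le 4(w^*+\bar q+\frac{q\mu}{\phi})$ cleanly from conditions (i)–(ii): condition (i) directly gives $\bar q(u_i+v_i\bar q)\le w^*+\bar q+\frac{q\mu}{\phi}$ for DP links, so dividing by $\bar q$ loses a $1/\bar q$ factor which is unbounded as $q\to 1$ — so the $4$ cannot be literally right for all $q$, and the argument must instead handle the regime $q$ close to $1$ separately (where $z$ becomes negative or tiny and the bound is vacuous) or use condition (ii) to chain inequalities through IP links in a way that controls the blow-up. Pinning down exactly how the factor $4$ (rather than $2$ or $1/\bar q$) arises, and verifying the $\min$ with $n_m$ handles the boundary case, is the delicate part; the rest is the routine monotonicity argument on $1-\frac{\mu}{T_i+\mu}$.
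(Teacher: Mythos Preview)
Your skeleton is right: reduce to showing $w^*:=u_{i^*}+v_{i^*}\bar q\ge z$, then use monotonicity of $1-\mu/(T_i+\mu)$. But two concrete ideas are missing, and without them the argument does not close.

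First, the role of $n_m$. You never need $w^*\ge n_m$ to hold in general; the case split is the point. If $w^*\ge n_m$ you are done because $z\le n_m$. If $w^*<n_m\le n_{i^*}$, then $u_{i^*}<n_{i^*}$, so some user in $N_{i^*}$ takes an IP, and condition~(ii) of Theorem~\ref{theo:msps} applied with $i=i^*$ (together with minimality of $i^*$) yields
\[
w^*\;\le\;\bar q\,(w^*+1)-\tfrac{q\mu}{\phi}.
\]
This extra inequality is what lets the later estimates go through; your proposal never isolates it and instead gets tangled trying to explain the $\min$.

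Second, the averaging step. Summing $\sum_i(u_i+v_i\bar q)$, as you propose, only gives a lower bound of $\bar q\,n$, which is why you keep running into the $1/\bar q$ blow-up and cannot recover the stated $\frac{n}{4m}$. The paper instead bounds the \emph{half-weighted} quantity $u_i+v_i/2$. Since $\sum_i u_i+\sum_i v_i=n$, one always has $\sum_i(u_i+v_i/2)=n-\tfrac12\sum_i v_i\ge n/2$, independently of $\bar q$. On the other side, conditions (i) and (ii) give, for each $i$ with $u_i>0$ or $v_i>0$,
\[
u_i+\tfrac{v_i}{2}\;\le\;2\Bigl(w^*+\bar q+\tfrac{q\mu}{\phi}\Bigr),
\]
so summing over $i$ yields $n/2\le 2m\bigl(w^*+\bar q+\tfrac{q\mu}{\phi}\bigr)$, i.e.\ $w^*\ge \frac{n}{4m}-\bar q-\frac{q\mu}{\phi}$. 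That is where the factor $4$ comes from: a $2$ from the per-link bound and a $2$ from the $n/2$ lower bound on the sum. Your attempt to get a factor-$4$ bound on $u_i+v_i\bar q$ directly will not work uniformly in $q$, and your lower bound on the sum is off by a factor of $\bar q$.
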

\begin{proof}
Let $i^*=\arg\min_{i\in[m]}\{u_i+v_i\bar{q}\}$. Since $TR(\mathbf p)\ge \mu m(1-\frac{\mu}{(u_{i^*}+v_{i^*}\bar{q})\phi+\mu})$, it suffices to prove that $u_{i^*}+v_{i^*}\bar{q}\ge z$. If $u_{i^*}+v_{i^*}\bar{q}\ge n_m$, it is done.  We only need to consider the case when $u_{i^*}+v_{i^*}\bar{q}< n_m\le n_{i^*}$. There exists some users in $N_{i^*}$ selecting IP. By Equation (\ref{con2}), we have 
$$u_{i^*}+v_{i^*}\bar{q}\le \bar{q}(u_{i^*}+1+v_{i^*}\bar{q})-\frac{q\mu}{\phi}.$$
By Theorem \ref{theo:msps}, for each $i\in[m]$, if $u_i>0$, then $\bar{q}(u_i+v_i\bar{q})\le u_{i^*}+v_{i^*}\bar{q}+\bar{q}+\frac{q\mu}{\phi}$; if $v_i>0$, then $u_i+v_i\bar{q}\le u_{i^*}+v_{i^*}\bar{q}+\bar{q}$. {In both cases, we obtain $u_i+v_i/2\le 2(u_{i^*}+v_{i^*}\bar{q}+\bar{q}+\frac{q\mu}{\phi})$.} 
Summing up over all $i\in[m]$, we have 
$$n/2\le \sum_{i\in[m]}(u_i+v_i/2)\le 2m (u_{i^*}+v_{i^*}\bar{q}+\bar{q}+\frac{q\mu}{\phi}),$$
which implies that $u_{i^*}+v_{i^*}\bar{q}\ge \frac{n}{4m}-\bar{q}-\frac{q\mu}{\phi}$. 
\end{proof}

\begin{theorem}
For any instance with $m$ sources, the price of anarchy is $PoA\le 1+\frac{n_1\mu}{n_1z\phi+z\mu}$, where $z=min\{n_m,\frac{n}{4m}-\bar{q}-\frac{q\mu}{\phi}\}$.
\end{theorem}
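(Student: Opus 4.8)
The plan is to combine, in a single step, the upper bound on the optimum from Lemma~\ref{lem:kdd} with the lower bound on every NE from Lemma~\ref{lem:kgg}, and then rewrite the resulting ratio into the claimed closed form by elementary algebra.

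First I would recall that Lemma~\ref{lem:kdd} gives $TR(opt)\le \mu m\bigl(1-\tfrac{\mu}{n_1\phi+\mu}\bigr)=\tfrac{\mu m n_1\phi}{n_1\phi+\mu}$, and Lemma~\ref{lem:kgg} gives, for every NE $\mathbf p$, $TR(\mathbf p)\ge \mu\bigl(m-\tfrac{m\mu}{z\phi+\mu}\bigr)=\tfrac{\mu m z\phi}{z\phi+\mu}$, where $z=\min\{n_m,\tfrac{n}{4m}-\bar q-\tfrac{q\mu}{\phi}\}$; I would note that this lower bound is meaningful only when $z>0$, which is the standing assumption. Dividing the two bounds, the common factor $\mu m$ cancels, so
\[
PoA=\frac{TR(opt)}{\min_{\mathbf p\in\mathbb{NE}}TR(\mathbf p)}\le \frac{n_1\phi/(n_1\phi+\mu)}{z\phi/(z\phi+\mu)}=\frac{n_1(z\phi+\mu)}{z(n_1\phi+\mu)}.
\]

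Next I would split the numerator as $n_1(z\phi+\mu)=n_1 z\phi+n_1\mu$, which yields
\[
\frac{n_1(z\phi+\mu)}{z(n_1\phi+\mu)}=\frac{n_1 z\phi}{z(n_1\phi+\mu)}+\frac{n_1\mu}{z(n_1\phi+\mu)}=\frac{n_1\phi}{n_1\phi+\mu}+\frac{n_1\mu}{n_1 z\phi+z\mu}.
\]
Since $\tfrac{n_1\phi}{n_1\phi+\mu}<1$, this is at most $1+\tfrac{n_1\mu}{n_1 z\phi+z\mu}$, which is exactly the stated bound.

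There is essentially no deep obstacle in this theorem: the argument is a one-line composition of the two preceding lemmas followed by routine manipulation of fractions. The only point deserving care is the positivity of $z$ — when $\tfrac{n}{4m}-\bar q-\tfrac{q\mu}{\phi}\le 0$ the NE lower bound of Lemma~\ref{lem:kgg} degenerates, so the statement should be read under the hypothesis $z>0$ (the network being large enough relative to the per-hop loss parameters), and I would make this explicit. All of the substantive work has already been carried out in establishing Lemmas~\ref{lem:kdd} and~\ref{lem:kgg}.
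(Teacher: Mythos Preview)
Your proposal is correct and follows exactly the paper's approach: divide the upper bound from Lemma~\ref{lem:kdd} by the lower bound from Lemma~\ref{lem:kgg} and simplify. In fact your treatment is slightly more careful, since the paper writes $\frac{n_1(z\phi+\mu)}{z(n_1\phi+\mu)} = 1 + \frac{n_1\mu}{n_1z\phi + z\mu}$ as an equality when it is really only the inequality $\le$ that you derive (and your remark about needing $z>0$ is likewise a worthwhile caveat the paper omits).
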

\begin{proof}
Combining the upper bound  in Lemma \ref{lem:kdd} and the lower bound on $TR(\mathbf p)$ for any NE $\mathbf p$ in Lemma \ref{lem:kgg}, it follows
$$PoA \le \frac{m - \frac{m\mu}{n_1\phi + \mu}}{m - \frac{m\mu}{z\phi + \mu}} = \frac{n_1(z\phi+\mu)}{z(n_1\phi+\mu)} = 1 + \frac{n_1\mu}{n_1z\phi + z\mu}.$$
\end{proof}


\section{A Particular Case: Two Sources}
\label{sec:twopure}

In this section, we focus on the special case of $m=2$. That is, there are only two sources $s_1$ and $s_2$. Assume w.l.o.g. that $n_1\ge n_2$. For each user $k \in N_i$, there is only one IP. Accordingly, its strategy becomes $\mathbf p_k^{(i)} = \left(p_{k1}^{(i)},p_{k2}^{(i)}\right), i = 1, 2$. 
And we have 
\begin{align}\label{eq:2sources}
    n_1 = u_1 + v_2; \\
    n_2 = u_2 + v_1.
\end{align}
The traffic rate $T_i(\mathbf p)$ in \eqref{eq:ti} is rephrased as
\begin{align}\label{eq:ti2}
    T_i(\mathbf p) = \sum_{k\in N_i}p_{ki}^{(i)} \phi + \sum_{k\in N_j, j\neq i} p_{ki}^{(j)} \bar{q}\phi = u_i \phi + v_i \bar{q} \phi.
\end{align}





Given strategy profile $\mathbf p$, the set $N$ is further partitioned into 4 subsets $(V_1,V_2,V_3,V_4)$ where {$V_1=\{k\in N_1~|~\mathbf p_k^{(1)} =(1, 0)\}$,  $V_2=\{k\in N_1~|~\mathbf p_k^{(1)} = (0,1)\}$, $V_3=\{k\in N_2~|~\mathbf p_k^{(2)}=(0,1)\}$ and $V_4=\{k\in N_2~|~\mathbf p_k^{(2)}=(1,0)\}$}.
Clearly, users in $V_1$ and $V_3$ choose DP, and users in $V_2$ and $V_4$ choose IP. 

Suppose $\mathbf p$ is a NE. We study the deviation of users in $V_1,V_2,V_3,V_4$, respectively. 
For user $k\in V_1$, the strategy is $\mathbf p_{k}^{(1)} = \left(p_{k1}^{(1)}, p_{k2}^{(1)} \right)=(1, 0)$, and the loss rate in \eqref{eq:lossrate} is
\begin{align}
    LR_k( \mathbf{p} ) 
    &\!=\! \frac{\phi T_1(\mathbf p)}{T_1(\mathbf p)+\mu} = \left[1 - \frac{\mu/\phi}{u_1 + v_1\bar{q} + \mu/\phi} \right] \phi.  \nonumber
\end{align}
When user {$k\in V_1$} deviates to IP, the strategy profile becomes {$\mathbf{p'} = \left( \mathbf{p'}_{k}^{(1)}, \mathbf{p}_{-k}\right)$ where $\mathbf p_{k}'^{(1)} =(0, 1)$.} 
The loss rate of user $k$ becomes
\begin{align*}
    LR_i(\mathbf p') \!=\! q\phi \!+\! \bar{q}\phi\frac{T_2(\mathbf p')}{T_2(\mathbf p') \!+\! \mu}
    \!=\! \left[1 \!-\! \frac{\bar{q}\mu/\phi}{u_2 \!+\! (v_2 \!+\! 1)\bar{q} \!+\! \mu/\phi}\right] \phi.
\end{align*}
Since $\mathbf p$ is NE, {$k$} has no incentive to deviate, and thus $LP_k(\mathbf p)\le LR_k(\mathbf p')$, 
which is equivalent to 
\begin{align}\label{eq:c1_}
t_1(u_2) : = \frac{q\mu/\phi \!+\! u_2(1 \!+\! \bar{q}^2) \!+\! (n_1 \!+\! 1)\bar{q}\!-\!n_2\bar{q}^2}{2\bar{q}} \geq u_1,
\end{align}
where $t_1(u_2)$ is a function with respect to variable $u_2$. 

For {user $k\in V_2$ with strategy $\mathbf p_{k}^{(1)} =(0, 1)$, the loss rate is
\begin{align*}
    LR_k(\mathbf p) = q\phi + \bar{q}\phi\frac{T_2(\mathbf p)}{T_2(\mathbf p) + \mu} = \left[1 - \frac{\bar{q}\mu/\phi}{u_2 + v_2\bar{q} + \mu/\phi}\right] \phi.
\end{align*}
}
When user {$k\in V_2$} deviates to DP, the strategy profile becomes {$\mathbf{p'} = \left( \mathbf{p'}_{k}^{(1)}, \mathbf{p}_{-k} \right)$ where $\mathbf p_{k}'^{(1)}=(1, 0)$. 
The loss rate of $k$ becomes
\begin{align*}
    LR_k(\mathbf p') = \frac{\phi T_1(\mathbf p')}{T_1(\mathbf p') + \mu} = \left[ 1 - \frac{\mu/\phi}{u_1+1+v_1\bar{q}+\mu/\phi} \right]\phi.    
\end{align*}
Since $\mathbf p$ is NE, we have $LR_k(\mathbf p)\le LR_k(\mathbf p')$, that is,  
\begin{equation}
    u_1\ge\frac{q\mu/\phi \!+\! u_2(1 \!+\! \bar{q}^2) \!+\! (n_1 \!-\! 1)\bar{q} \!-\! n_2\bar{q}^2}{2\bar{q}} = t_1(u_2) \!-\! 1.
    \label{eq:c2}
\end{equation}
}

Symmetrically, for each user $k\in V_3$ and  $k\in V_4$, since $\mathbf p$ is NE, we have
\begin{align}
t_2(u_1)-1 \leq u_2 \leq t_2(u_1),
\label{eq:c4}
\end{align}
where
\begin{align*}
   t_2(u_1) := \frac{q\mu/\phi + u_1(1+\bar{q}^2)+(n_2+1)\bar{q} - n_1\bar{q}^2}{2\bar{q}}.
\end{align*}
Note that Eq. (\ref{eq:c1_}) - (\ref{eq:c4}) are the sufficient and necessary conditions for an arbitrary strategy $\mathbf{p}$ to achieve NE. 
Now we are ready to give a characterization of NEs.  
\begin{theorem}\label{thm:cha}
Let $\mathbf p$ be an arbitrary strategy profile for the game with two sources. Let $u_1$ and $u_2$ be the number of users in $N_1$ and $N_2$ who choose DP under $\mathbf p$, respectively. We have
\begin{itemize}
    \item[{\rm [1]}] when (a) $u_1=n_1,u_2<n_2$, or (b) $u_1=0,u_2>0$, $\mathbf p$ cannot be a NE;
    \item[{\rm [2]}] when $u_1\in[0,n_1),u_2\in[0,n_2)$, $\mathbf p$ is NE if and only if $u_1\ge t_1(u_2)-1$ and $u_2\ge t_2(u_1)-1$;
    \item[{\rm [3]}] when $u_1\in(0,n_1),u_2=n_2$, $\mathbf p$ is NE if and only if $u_1\in[t_1(u_2)-1,t_1(u_2)]$;
    \item[{\rm [4]}] when $u_1=n_1,u_2=n_2$, $\mathbf p$ is NE if and only if $n_1\bar{q}\le q\mu/\phi+n_2+\bar{q}$.
\end{itemize}
\end{theorem}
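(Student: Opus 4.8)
## Proof Proposal

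The plan is to exhaust all four regimes in the statement, in each case matching the generic Nash conditions \eqref{eq:c1_}--\eqref{eq:c4} to the boundary behaviour of $u_1,u_2$ at the extremes $0$ and $n_i$. The overarching observation is that \eqref{eq:c1_}--\eqref{eq:c4} were derived under the tacit assumption that the relevant set $V_1,V_2,V_3,V_4$ is nonempty; so for each configuration I must first identify which of the four deviation constraints are actually \emph{active} (i.e., correspond to a nonempty user class) and then simplify.

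For case [1], I would argue by direct contradiction. In subcase (a), $u_1=n_1$ forces $V_2=\emptyset$ and $V_1=N_1\neq\emptyset$, while $u_2<n_2$ forces $V_4\neq\emptyset$; so the active constraints are \eqref{eq:c1_} (from $V_1$) and the upper part of \eqref{eq:c4} (from $V_4$). Plugging $u_1=n_1$ into \eqref{eq:c1_} and $u_2<n_2=u_2+v_1$ (so $v_1\ge 1$, and also $n_1=u_1+v_2=n_1$ gives $v_2=0$) and combining the two inequalities should produce a strict contradiction because $\bar q<1$ makes relaying strictly lossy — essentially the same monotonicity argument as in Lemma~\ref{lem:both}. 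In subcase (b), $u_1=0$ means $V_1=\emptyset$ but $V_2=N_1\neq\emptyset$ (since $n_1\ge n_2> u_2\ge 0$... more carefully, $u_1=0$ and $n_1\ge1$ force $V_2\neq\emptyset$), while $u_2>0$ means $V_3\neq\emptyset$; then constraint \eqref{eq:c2} at $u_1=0$ reads $0\ge t_1(u_2)-1$, and the $V_3$-constraint (lower part of \eqref{eq:c4}) gives $u_2\ge t_2(0)-1$; substituting one into the other and using $n_1\ge n_2$ should again be contradictory. I expect these two contradictions to be short sign computations once the right constraints are isolated.

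Case [2] is the "interior" case where all four classes can be nonempty, so \emph{all} of \eqref{eq:c1_}--\eqref{eq:c4} are in force. I would show that the binding ones are exactly the two lower bounds: $u_1\ge t_1(u_2)-1$ (from $V_2$, via \eqref{eq:c2}) and $u_2\ge t_2(u_1)-1$ (from $V_3$, via \eqref{eq:c4}). The point is that the upper bounds $u_1\le t_1(u_2)$ and $u_2\le t_2(u_1)$ from $V_1$ and $V_4$ are then \emph{automatically} implied: given $u_1\ge t_1(u_2)-1$ and $u_2\ge t_2(u_1)-1$, substituting the definition of $t_2$ into the first inequality and simplifying the linear system in $(u_1,u_2)$ (the coefficients involve $1+\bar q^2$ and $2\bar q$) should yield $u_2\le t_2(u_1)$, and symmetrically $u_1\le t_1(u_2)$; this is a $2\times 2$ linear-inequality manipulation. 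Cases [3] and [4] are the remaining boundary mixtures: in [3], $u_2=n_2$ kills $V_4$ so only the $V_1,V_2,V_3$ constraints survive, and since $u_2$ is pinned there is no $t_2$ condition left, leaving exactly the two-sided band $u_1\in[t_1(u_2)-1,t_1(u_2)]$ from $V_1$ and $V_2$; in [4], both $V_2$ and $V_4$ are empty, only $V_1$'s constraint \eqref{eq:c1_} is active, and evaluating it at $u_1=n_1,u_2=n_2$ (so $v_1=v_2=0$) gives precisely $n_1\bar q\le q\mu/\phi+n_2+\bar q$ after clearing the $2\bar q$ denominator.

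The main obstacle I anticipate is bookkeeping rather than depth: correctly tracking, for each of the four configurations, \emph{which} $V_\ell$ are empty (hence which constraints vanish) and whether a class being empty means the constraint is vacuous or must be replaced by a one-sided version — a subtle point is that when, say, $v_1=0$ but $u_2<n_2$, the "$V_3$ deviates to $V_4$" and "$V_3$ deviates to DP" sub-cases still both apply, whereas if $V_3=\emptyset$ they disappear entirely. The other place needing care is the linear-algebra step in case [2] showing the lower bounds imply the upper bounds; I would double-check the sign of $1-\bar q^2>0$ there, since that positivity is what makes the implication go through. Once the active-constraint table is set up correctly, each case reduces to an elementary inequality, and I would present them compactly rather than rederiving \eqref{eq:c1_}--\eqref{eq:c4} from scratch.
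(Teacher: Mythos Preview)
Your overall strategy---case-split on which of $V_1,\dots,V_4$ are nonempty and retain only the active deviation constraints---is exactly the paper's, and your treatment of Case~[2] (arguing that the two lower bounds force the two upper bounds via the linear relation between $t_1,t_2$) is in fact more explicit than the paper's ``it is easy to see.'' That implication is the right tool; the key inequality behind it is $2\bar q\le 1+\bar q^2$, which makes $u_1\ge t_1(u_2)-1$ alone already imply $u_2\le t_2(u_1)$ (and symmetrically).

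There are, however, two bookkeeping slips of precisely the kind you anticipated. First, you have the $V_3$/$V_4$ assignments reversed: the \emph{upper} half of \eqref{eq:c4}, namely $u_2\le t_2(u_1)$, comes from a DP-user at $s_2$ (that is $V_3$), while the \emph{lower} half $u_2\ge t_2(u_1)-1$ comes from an IP-user at $s_2$ (that is $V_4$). So in Case~[1](a) the active $V_4$ constraint is the lower bound, not the upper; and in Case~[1](b) the $V_3$ constraint is the upper bound, not the lower. In fact both subcases of [1] can be dispatched with a \emph{single} constraint: in (a) the $V_4$ lower bound $u_2\ge t_2(n_1)-1$ alone is already impossible since $n_1\ge n_2>u_2$, and in (b) the $V_2$ lower bound $0\ge t_1(u_2)-1$ alone is already impossible since $u_2\ge 1$ forces $t_1(u_2)-1>0$.

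Second, in Cases~[3] and~[4] you drop the $V_3$ constraint too quickly. Saying ``$u_2$ is pinned so there is no $t_2$ condition left'' is not right: $u_2=n_2>0$ makes $V_3$ nonempty, so the condition $n_2\le t_2(u_1)$ is genuinely active and must be shown redundant. The paper handles this in Case~[3] by observing that $u_1\ge t_1(u_2)-1$ already implies $u_2\le t_2(u_1)$ (the same implication you set up for Case~[2]); in Case~[4] both $u_1\le t_1(u_2)$ and $u_2\le t_2(u_1)$ are active, and the second is dominated by the first only because $n_1\ge n_2$. Once you correct the $V_3$/$V_4$ labels and carry the ``lower $\Rightarrow$ upper'' implication into Cases~[3]--[4], your argument goes through and matches the paper's.
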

\begin{proof}
Given $\mathbf p$, let $(V_1,V_2,V_3,V_4)$ be a partition of $N$ as defined above. We discuss the four cases. 

\textbf{Case 1.} When (a) $u_1=n_1$ and $u_2<n_2$,  $V_4$ is nonempty. If $\mathbf p$ is a NE, it must satisfy $t_2(u_1)-1\le u_2$.  
However, because $\bar{q}u_2\le n_1,\bar{q}u_2\le (n_2-1)\bar{q}$ and $q\mu/\phi>0$, it cannot hold. 

When (b) $u_1=0$ and $u_2>0$, $V_2$ is nonempty. If $\mathbf p$ is a NE, it must satisfy Eq. (\ref{eq:c2}), that is, $u_1\ge t_1(u_2)-1$. It follows that $0=2\bar{q}u_1\ge q\mu/\phi+u_2(1+\bar{q}^2)+(n_1-1)\bar{q}-n_2\bar{q}^2\ge q\mu/\phi+1+(n_1-1)\bar{q}-(n_2-1)\bar{q}^2\ge q\mu/\phi+1>0$, a contradiction. 

\smallskip\textbf{Case 2.} When $u_1\in[0,n_1),u_2\in[0,n_2)$, $V_2$ and $V_4$ are nonempty. It is easy to see that  $\mathbf p$ is NE if and only if $u_1\ge t_1(u_2)-1$ and $u_2\ge t_2(u_1)-1$ are satisfied simultaneously. 

\smallskip\textbf{Case 3.} When $u_2=n_2,u_1\in(0,n_1)$, $V_1,V_2,V_3$ are nonempty, and $V_4$ is empty.  $\mathbf p$ is NE if and only if $t_1(u_2)-1\le u_1\le t_1(u_2)$ and $u_2\le t_2(u_1)$ hold simultaneously. Moreover, note that $u_2\le t_2(u_1)$ is implied by $u_1\ge t_1(u_2)-1$. Therefore, the sufficient and necessary condition for NE is $u_1\in[t_1(u_2)-1,t_1(u_2)]$.

\smallskip\textbf{Case 4.} When $u_1=n_1,u_2=n_2$, $V_1,V_3$ are nonempty, and $V_2,V_4$ are empty.  $\mathbf p$ is NE if and only if $u_1\le t_1(u_2)$ and $u_2\le t_2(u_1)$. It is easy to see that, it is equivalent to $n_1\bar{q}\le q\mu/\phi+n_2+\bar{q}$.

\smallskip Note that every situation of $u_1,u_2$ is included in the above four cases. So we complete a characterization.
\end{proof}

Case 4 can be intuitively explained by considering the sidelink loss probability $q$ over link $(s_1,s_2)$. If $q$ is sufficiently high, no user would prefer the indirect path, and selecting the direct path  would be a NE for all users. Conversely, when there is no transmission loss over sidelink $(s_1,s_2)$ (i.e., $q=0$), every user would prefer to use the source with fewer users. Therefore, the profile of all users selecting DP is a NE only if the user distribution between the two sources is as even as possible, with $n_1\le n_2+1$. Based on Theorem \ref{thm:cha}, we give some interesting conclusions.
\begin{corollary}\label{cor:NE2}
If a strategy profile with $u_1=n_1,u_2=n_2$ is optimal, then it is also a NE. 
\end{corollary}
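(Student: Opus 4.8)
The plan is to invoke the characterization in Theorem~\ref{thm:cha}. Since $u_1=n_1$ and $u_2=n_2$, the relations in \eqref{eq:2sources} force $v_1=v_2=0$, so $\mathbf p$ is precisely the profile in which every user routes through its direct path, and by case~[4] of Theorem~\ref{thm:cha} this profile is a NE if and only if $n_1\bar q\le q\mu/\phi+n_2+\bar q$. It therefore suffices to prove the contrapositive: if $n_1\bar q> q\mu/\phi+n_2+\bar q$, then $\mathbf p$ is not optimal.

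First I would extract a structural consequence of the failure inequality $n_1\bar q> q\mu/\phi+n_2+\bar q$: it rules out $q=1$ (which would give $0>\mu/\phi+n_2$), so $\bar q>0$; and then $n_1\ge \bar q n_1> n_2$, whence $n_1\ge n_2+1$. This strict gap between $n_1$ and $n_2$ is the fact I will lean on.

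Next, starting from $\mathbf p$ (with $T_1=n_1\phi$ and $T_2=n_2\phi$) I would consider the deviation $\mathbf p'$ in which one user of $N_1$ switches to the indirect path $(s_1,s_2,d)$, which is possible since $n_1\ge 1$; then $T_1(\mathbf p')=(n_1-1)\phi$ and $T_2(\mathbf p')=(n_2+\bar q)\phi$. Plugging these into the closed form \eqref{equ:29} and simplifying, the sign of $TR(\mathbf p')-TR(\mathbf p)$ equals the sign of $\frac{\bar q}{(n_2\phi+\mu)((n_2+\bar q)\phi+\mu)}-\frac{1}{(n_1\phi+\mu)((n_1-1)\phi+\mu)}$. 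Separately, clearing denominators in the failure inequality (using $q\mu+\bar q\mu=\mu$) gives $\bar q(n_1\phi+\mu)>(n_2+\bar q)\phi+\mu$, i.e. $\frac{\bar q}{(n_2+\bar q)\phi+\mu}>\frac{1}{n_1\phi+\mu}$. Dividing through by $n_2\phi+\mu$ and then using $n_2\le n_1-1$ (so $n_2\phi+\mu\le(n_1-1)\phi+\mu$) yields $\frac{\bar q}{(n_2\phi+\mu)((n_2+\bar q)\phi+\mu)}>\frac{1}{(n_2\phi+\mu)(n_1\phi+\mu)}\ge\frac{1}{((n_1-1)\phi+\mu)(n_1\phi+\mu)}$, hence $TR(\mathbf p')>TR(\mathbf p)$, contradicting optimality of $\mathbf p$.

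The subtle point — and where the argument could fail if one is careless — is that ``optimal implies NE'' is not automatic from single‑deviation monotonicity, because moving one $N_1$‑user raises $T_2$ and thus the congestion loss experienced by every $s_2$‑user, a genuine externality absent from the NE inequality. The step that makes everything line up is $n_2\le n_1-1$ in the final chain, which is exactly why one must first derive $n_1\ge n_2+1$ from the failure hypothesis. As a sanity check, when $n_1=n_2$ the case‑[4] inequality $n_1\bar q\le q\mu/\phi+n_2+\bar q$ holds automatically (since $n_1\bar q\le n_1=n_2$), so the all‑DP profile is always a NE in that regime and the excluded case causes no trouble.
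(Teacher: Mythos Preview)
Your argument is correct. The paper states Corollary~\ref{cor:NE2} without proof, as an immediate consequence of Theorem~\ref{thm:cha}; your write-up supplies exactly the missing details, via the natural contrapositive route through case~[4] of that theorem. The key step---deriving $n_1\ge n_2+1$ from the failure of the NE inequality and then using it to compare $n_2\phi+\mu$ with $(n_1-1)\phi+\mu$---is handled cleanly, and the algebraic manipulation of the failure inequality into $\bar q(n_1\phi+\mu)>(n_2+\bar q)\phi+\mu$ is the right pivot. There is nothing substantively different from what the paper intends; you have simply made explicit what the authors left implicit.
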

\begin{corollary}\label{cor:55}
A strategy profile with $u_1=0,u_2=0$ is a NE, if and only if (a) $n_1=n_2+1,\bar{q}=1$, or (b) $n_1=n_2,n_1(1-\bar{q}^2)\le \bar{q}-\frac{q\mu}{\phi}$.
\end{corollary}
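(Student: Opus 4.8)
The plan is to read everything off the two-source NE characterization, Theorem~\ref{thm:cha}. Write $\bar{q}=1-q$ and recall $n_1\ge n_2\ge 1$. First I would locate the profile $u_1=u_2=0$ among the four cases of that theorem: when no user of either source uses DP we have $V_1=V_3=\varnothing$ and $V_2=N_1$, $V_4=N_2$ nonempty, so $u_1=0\in[0,n_1)$ and $u_2=0\in[0,n_2)$, and we are in case~[2]. The extreme $q=1$ (so $\bar{q}=0$) I would dispose of separately: then every relayed packet is lost for sure, each user strictly prefers DP, and the all-IP profile is never a NE, which is consistent with neither (a) nor (b). For $q<1$, case~[2] gives that the profile is a NE iff $t_1(0)\le 1$ and $t_2(0)\le 1$; plugging $u_2=0$ into the formulas for $t_1,t_2$ and clearing the positive factor $2\bar{q}$, these become, respectively, $(n_1-1)\bar{q}+\frac{q\mu}{\phi}\le n_2\bar{q}^{2}$ and $(n_2-1)\bar{q}+\frac{q\mu}{\phi}\le n_1\bar{q}^{2}$.

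Second, I would note that the first of these two inequalities already implies the second, so the whole NE requirement collapses to the first: since $n_1\ge n_2$ and $\bar{q}\ge 0$, the left-hand side of the second inequality is at most that of the first, while the right-hand side of the first is at most that of the second.

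Third, I would analyze the surviving inequality $(n_1-1)\bar{q}+\frac{q\mu}{\phi}\le n_2\bar{q}^{2}$ according to the size of the sidelink loss. If $q=0$ (so $\bar{q}=1$) it reads $n_1-1\le n_2$, which with $n_1\ge n_2$ forces $n_1\in\{n_2,n_2+1\}$: the alternative $n_1=n_2+1$ is exactly case~(a), and $n_1=n_2$ is exactly case~(b) specialized to $\bar{q}=1$. If $0<q<1$, then, because $\frac{q\mu}{\phi}>0$ and $\bar{q}^{2}\le\bar{q}$, the inequality yields $(n_1-1)\bar{q}<n_2\bar{q}$, hence $n_1-1<n_2$, hence $n_1=n_2$; substituting $n_1=n_2$ back in and simplifying (with $1-\bar{q}=q$) reduces it to the inequality displayed in case~(b). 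For the ``if'' direction it then suffices to check, by the same substitutions, that each of (a) and (b) implies the surviving inequality. Combining the two regimes with the $q=1$ remark gives the claimed equivalence.

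The step I expect to be the crux is the middle regime $0<q<1$, where one must observe that any strictly positive sidelink loss makes the surviving inequality unsatisfiable unless $n_1=n_2$ --- informally, if one source is even marginally more congested then its relayed users would rather return to their own direct link --- after which only routine algebra remains. A lesser point that still needs care is the bookkeeping of which of $V_1,\dots,V_4$ are empty so that the correct item of Theorem~\ref{thm:cha} is invoked, together with the separate handling of the endpoints $q=0$ and $q=1$ (and, if one does not assume $n_2\ge 1$, a short direct argument for the degenerate case $n_2=0$, which escapes the four cases of Theorem~\ref{thm:cha}).
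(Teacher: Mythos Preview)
The paper does not supply a separate proof of this corollary; it is simply stated after Theorem~\ref{thm:cha} as an immediate consequence. Your plan---specialize case~[2] of that theorem to $u_1=u_2=0$, reduce the pair of conditions $t_1(0)\le 1$ and $t_2(0)\le 1$ to the single inequality $(n_1-1)\bar q+\tfrac{q\mu}{\phi}\le n_2\bar q^{2}$ using $n_1\ge n_2$, and then split on whether $q=0$ or $0<q<1$---is exactly the intended route, and each step is correct.

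One point you should not gloss over. In the regime $0<q<1$ you correctly force $n_1=n_2$, and you then write that substituting $n_1=n_2$ ``reduces it to the inequality displayed in case~(b).'' If you actually carry out that substitution you obtain
\[
n_1\bar q(1-\bar q)\;=\;n_1 q\bar q\;\le\;\bar q-\frac{q\mu}{\phi},
\]
whereas the printed form of~(b) has $n_1(1-\bar q^{2})=n_1 q(1+\bar q)$ on the left. These two left-hand sides differ by $n_1 q$ and are genuinely different conditions for $q\in(0,1)$. Your derivation is the one that follows from Theorem~\ref{thm:cha}; the discrepancy points to a typo in the stated corollary (most plausibly $1-\bar q^{2}$ should be $\bar q-\bar q^{2}$). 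So your argument is sound, but do the algebra explicitly rather than asserting that it matches---as written, it does not.
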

Note that $u_1\ge t_1(u_2)-1$ and $u_2\ge t_2(u_1)-1$ cannot hold simultaneously when $q>\frac{2}{n}$, and $u_1\ge t_1(n_2)-1$ cannot hold when $n_1\bar{q}< q\mu/\phi+n_2+\bar{q}$.
\begin{corollary}\label{cor:uni}
When $n_1\bar{q}< q\mu/\phi+n_2+\bar{q}$ and $q>\frac{2}{n}$, the unique NE is that all users choose DP, i.e., $u_1=n_1,u_2=n_2$.
\end{corollary}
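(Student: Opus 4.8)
The plan is to combine the previous two structural observations (the inline remarks preceding the corollary) with an existence argument for the all-DP profile. First, the assumption $n_1\bar q < q\mu/\phi + n_2 + \bar q$ is precisely the condition in Case 4 of Theorem \ref{thm:cha}, so the profile with $u_1=n_1,u_2=n_2$ is a NE; this disposes of existence. It remains to show uniqueness, i.e., that no profile with $u_1<n_1$ or $u_2<n_2$ can be a NE when additionally $q>2/n$. I would organize the argument by matching each such profile to the relevant case of Theorem \ref{thm:cha} and deriving a contradiction from the stated conditions.

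Concretely, I would proceed case by case. If $u_1=n_1$ and $u_2<n_2$, then by Case 1(a) of Theorem \ref{thm:cha} the profile is not a NE, with no extra hypothesis needed. If $u_1=0$ and $u_2>0$, Case 1(b) rules it out. If $u_1\in(0,n_1)$ and $u_2=n_2$, Case 3 requires $u_1\ge t_1(n_2)-1$; but the inline remark already notes that $u_1\ge t_1(n_2)-1$ fails exactly when $n_1\bar q < q\mu/\phi+n_2+\bar q$, which is our hypothesis, so this case is excluded. The remaining (and main) case is $u_1\in[0,n_1),u_2\in[0,n_2)$: here Case 2 requires $u_1\ge t_1(u_2)-1$ and $u_2\ge t_2(u_1)-1$ simultaneously. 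I would show these two inequalities are incompatible when $q>2/n$: add them, substitute the definitions of $t_1$ and $t_2$, and simplify. Summing, $2\bar q(u_1+u_2)\ge 2q\mu/\phi + (u_1+u_2)(1+\bar q^2) + (n_1+n_2)\bar q - (n_1+n_2)\bar q^2 - 2\bar q$, i.e. $(u_1+u_2)(2\bar q - 1 - \bar q^2)\ge 2q\mu/\phi + (n_1+n_2)\bar q(1-\bar q) - 2\bar q$. Since $2\bar q - 1 - \bar q^2 = -(1-\bar q)^2 = -q^2$, the left side is $-q^2(u_1+u_2)$, which is at most $0$; while the right side, using $u_1+u_2\le n=n_1+n_2$ on the left after multiplying through by $-1$, yields $q^2 n \ge q^2(u_1+u_2) \ge 2q\mu/\phi + q(n_1+n_2)(1-\bar q) - 2q > q(n_1+n_2)q - 2q$ wait — I would instead bound directly: $q^2(u_1+u_2) \le q^2 n$, so the required inequality forces $q^2 n \ge 2q\mu/\phi + q^2(n_1+n_2) - 2\bar q \ge q^2 n - 2\bar q + 2q\mu/\phi$, which is consistent, so this crude bound is too weak and I must be more careful — the cleaner route is: the inline remark already asserts incompatibility when $q>2/n$, so I would reconstruct that estimate by instead using $u_1+u_2 \ge 0$ on the negative left side to get $0 \ge -q^2(u_1+u_2)$ and then needing $2q\mu/\phi + q(n_1+n_2)(1-\bar q) - 2\bar q \le 0$, i.e. roughly $q n \le 2$, contradicting $q>2/n$ (absorbing the positive $2q\mu/\phi$ term and treating $\bar q\le 1$).

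The main obstacle is the last case: pinning down the exact arithmetic that shows $u_1\ge t_1(u_2)-1$ and $u_2\ge t_2(u_1)-1$ cannot both hold under $q>2/n$. The subtlety is that $u_1,u_2$ are free nonnegative integers (not tied to $n_1,n_2$ here since users may sit on IP), so the contradiction must come purely from the coefficient structure: after summation the $(u_1+u_2)$-terms collapse with coefficient $-q^2\le 0$, forcing the constant remainder $2q\mu/\phi + (n_1+n_2)q\bar q - 2\bar q$ (all the $n_i$-dependent and $\mu$-dependent terms) to be nonpositive, which since $\bar q\le 1$ and $\mu/\phi>0$ implies $q(n_1+n_2) = qn < 2$, contradicting the hypothesis $q>2/n$. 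I would double-check the sign bookkeeping in this elimination, since that is where an off-by-a-factor error would hide; everything else is a direct appeal to Theorem \ref{thm:cha}.
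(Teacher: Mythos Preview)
Your plan is correct and mirrors the paper's approach exactly: the paper's proof is nothing more than the two inline remarks preceding the corollary combined with the case split of Theorem~\ref{thm:cha}, and you have reconstructed both remarks (the Case~2 incompatibility via summing $u_1\ge t_1(u_2)-1$ and $u_2\ge t_2(u_1)-1$ to get $-q^2(u_1+u_2)\ge 2q\mu/\phi+nq\bar q-2\bar q$, hence $nq<2$; and the Case~3 failure of $u_1\ge t_1(n_2)-1$ for $u_1<n_1$). The only cleanup needed is to discard the false starts in your middle paragraph and present directly the final-paragraph computation, which is the right one.
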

We end this section by  proving the existence of NE.
\begin{theorem}
For any game instance with two sources, there exists a NE with $u_1>0$ and $u_2=n_2$.
\end{theorem}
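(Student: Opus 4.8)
The plan is to exhibit a NE of the claimed form by a constructive/monotonicity argument, using the characterization in Theorem~\ref{thm:cha}. Since we want $u_2=n_2$ (all $N_2$-users on DP) and $u_1>0$, the relevant regime is Case~3 of Theorem~\ref{thm:cha} (when $u_1\in(0,n_1)$) or Case~4 (when $u_1=n_1$). In Case~4 the NE condition is the single inequality $n_1\bar q\le q\mu/\phi+n_2+\bar q$; if this holds we are immediately done with $u_1=n_1$. So assume $n_1\bar q> q\mu/\phi+n_2+\bar q$, i.e.\ the all-DP profile fails. Then I would look for an integer $u_1\in(0,n_1)$ satisfying the Case~3 condition $u_1\in[t_1(n_2)-1,\,t_1(n_2)]$, where $t_1(n_2)=\frac{q\mu/\phi+n_2(1+\bar q^2)+(n_1+1)\bar q-n_2\bar q^2}{2\bar q}$.

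The key step is to show that $\lceil t_1(n_2)\rceil$ (or $\lfloor t_1(n_2)\rfloor$) is a valid choice, i.e.\ that $t_1(n_2)$ lands in the open interval $(0,n_1)$ under the standing assumption, so that an integer in $[t_1(n_2)-1,t_1(n_2)]\cap(0,n_1)$ exists. For the lower bound: $t_1(n_2)-1>0$ should follow because the numerator of $t_1(n_2)$ contains the positive terms $q\mu/\phi$, $n_2(1+\bar q^2)$ and $(n_1+1)\bar q$, against which $-n_2\bar q^2$ and the $-2\bar q$ from subtracting $1$ are dominated (using $\bar q\le 1$ and $n_1\ge n_2$); I would write this out as a short chain of inequalities. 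For the upper bound $t_1(n_2)<n_1$: rearranging $t_1(n_2)<n_1$ gives exactly $q\mu/\phi+n_2(1+\bar q^2)+(n_1+1)\bar q-n_2\bar q^2<2\bar q n_1$, i.e.\ $q\mu/\phi+n_2+(n_1+1)\bar q<2\bar q n_1+n_2\bar q^2-n_2\bar q^2+\ldots$ — after simplification this reduces to precisely $n_1\bar q> q\mu/\phi+n_2+\bar q$, which is the standing assumption. Thus $t_1(n_2)\in(0,n_1)$ under this assumption, and an integer $u_1^{\star}$ in $[t_1(n_2)-1,t_1(n_2)]$ exists and lies in $(0,n_1)$; the profile with this $u_1^{\star}$, $u_2=n_2$, $v_1=n_2-u_2=0$, $v_2=n_1-u_1^{\star}$ is a NE by Case~3.

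The main obstacle is the boundary bookkeeping: ensuring the interval $[t_1(n_2)-1,t_1(n_2)]$, which has length~1, actually contains an integer that is also strictly between $0$ and $n_1$ — a priori the unique integer in a length-1 interval could coincide with $0$ or $n_1$. I would handle this by arguing that if the only integer in $[t_1(n_2)-1,t_1(n_2)]$ were $n_1$ then $t_1(n_2)\ge n_1$, contradicting the strict inequality $t_1(n_2)<n_1$ just derived; and if it were $0$ then $t_1(n_2)\le 1$, contradicting $t_1(n_2)-1>0$. (The edge case $t_1(n_2)-1<0<t_1(n_2)$, where $0$ is forced, is exactly excluded by the lower bound step.) A secondary check is the degenerate sub-case $\bar q=0$ (i.e.\ $q=1$), where $t_1$ is undefined; there the all-DP profile trivially satisfies Case~4 since $n_1\cdot 0\le \mu/\phi+n_2+0$, so $u_1=n_1,u_2=n_2$ works and the claim holds. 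Assembling these pieces gives the theorem.
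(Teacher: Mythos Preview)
Your proposal is correct and follows essentially the same route as the paper: split on the Case~4 condition, and in the complementary regime pick an integer $u_1^\star\in[t_1(n_2)-1,t_1(n_2)]$ and verify $0<u_1^\star<n_1$ via the two bounds $t_1(n_2)-1>0$ and $t_1(n_2)<n_1$, the latter being equivalent to the standing assumption. Your treatment is slightly more careful than the paper's (you explicitly dispose of the degenerate case $\bar q=0$ and spell out why the length-$1$ interval cannot force the integer to be $0$ or $n_1$), but the argument is the same.
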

\begin{proof}
By Theorem \ref{thm:cha} (4), if $n_1\bar{q}\le q\mu/\phi+n_2+\bar{q}$, then the strategy profile that all users choose DP (i.e., $u_1=n_1,u_2=n_2$) is a NE. Otherwise, $n_1\bar{q}> q\mu/\phi+n_2+\bar{q}$. Let $\tilde m$ be an integer in interval $[\frac{q\mu/\phi+n_2+n_1\bar{q}-\bar{q}}{2\bar{q}},\frac{q\mu/\phi+n_2+n_1\bar{q}+\bar{q}}{2\bar{q}}]=[t_1(n_2)-1,t_1(n_2)]$, which always admits at least one integer. Note that $n_1>\frac{q\mu/\phi+n_2+n_1\bar{q}+\bar{q}}{2\bar{q}}\ge \tilde m>0$. By Theorem \ref{thm:cha}, a strategy profile with $u_1=\tilde m$ and $u_2=n_2$ is a NE. 
\end{proof}

\section{Numerical experiments}\label{sec:sim}

\begin{figure*}[htp]
     \centering
     \begin{subfigure}[b]{0.3\linewidth}
         \centering
         \includegraphics[width=2.1in]{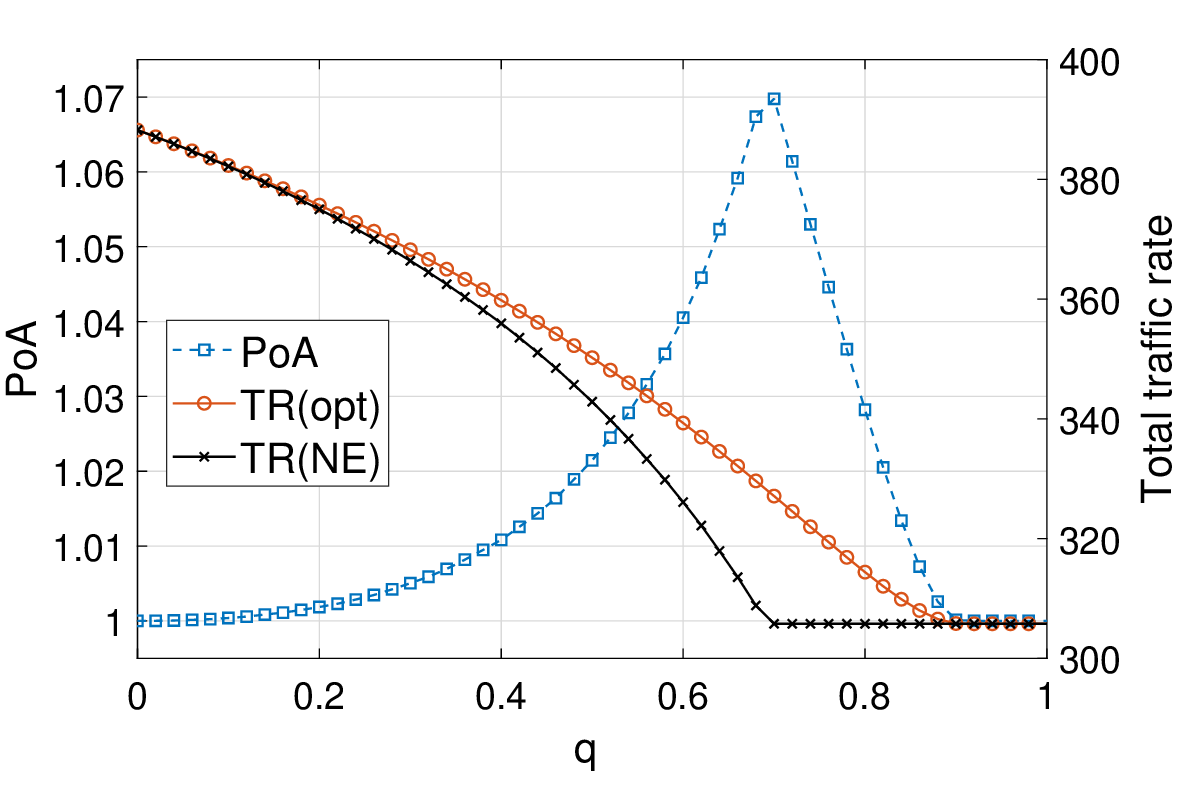}
        \caption{{$q\!\in\![0,1]$ with $n_1 \!= \!1000$, $ n_2\!=\!100$, $\phi\!=\!1$, $\mu=300$.}}
         \label{2p_q}
     \end{subfigure}
     \begin{subfigure}[b]{0.3\linewidth}
         \centering
         \includegraphics[width=2.1in]{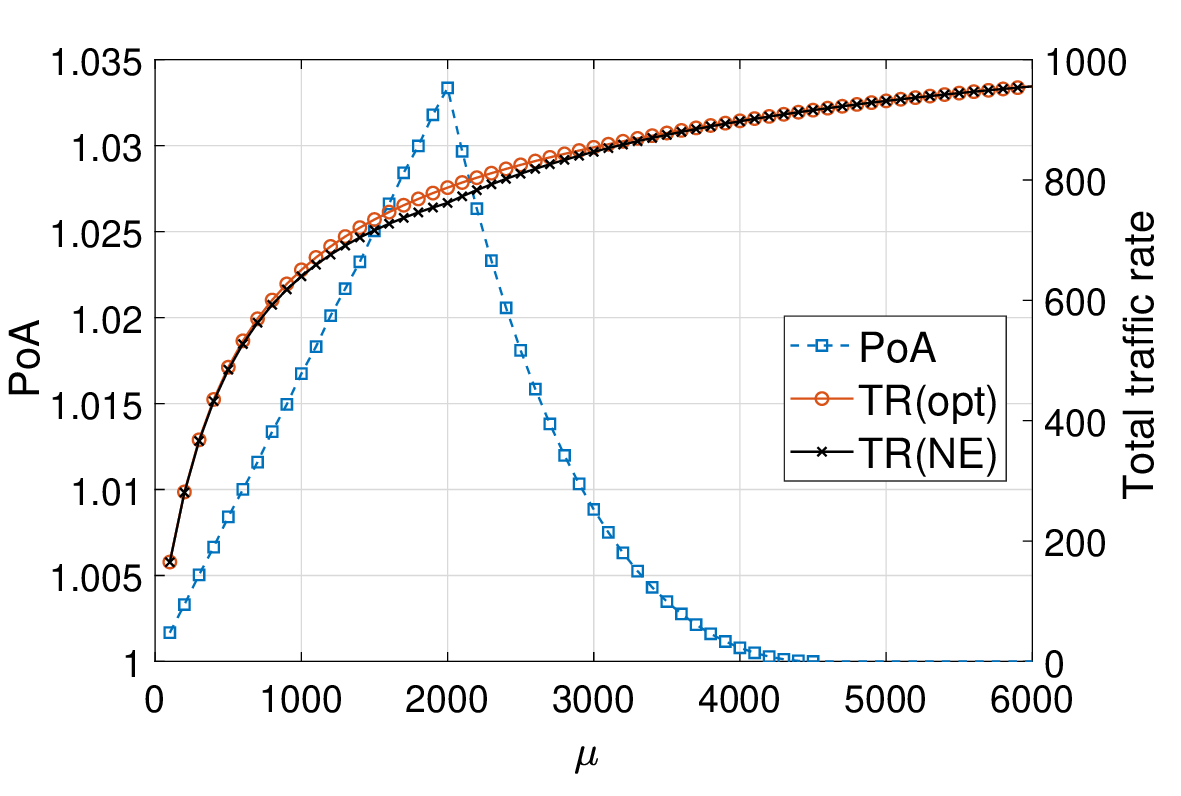}
         \caption{{$\mu\!\in\![1,6000]$ with $n_1\! =\! 1000$, $n_2\!=\!100$, $\phi\!=\!1$, $q\!=\!0.3$.}}
         \label{2p_mu}
     \end{subfigure}
     \begin{subfigure}[b]{0.3\linewidth}
         \centering
         \includegraphics[width=2.1in]{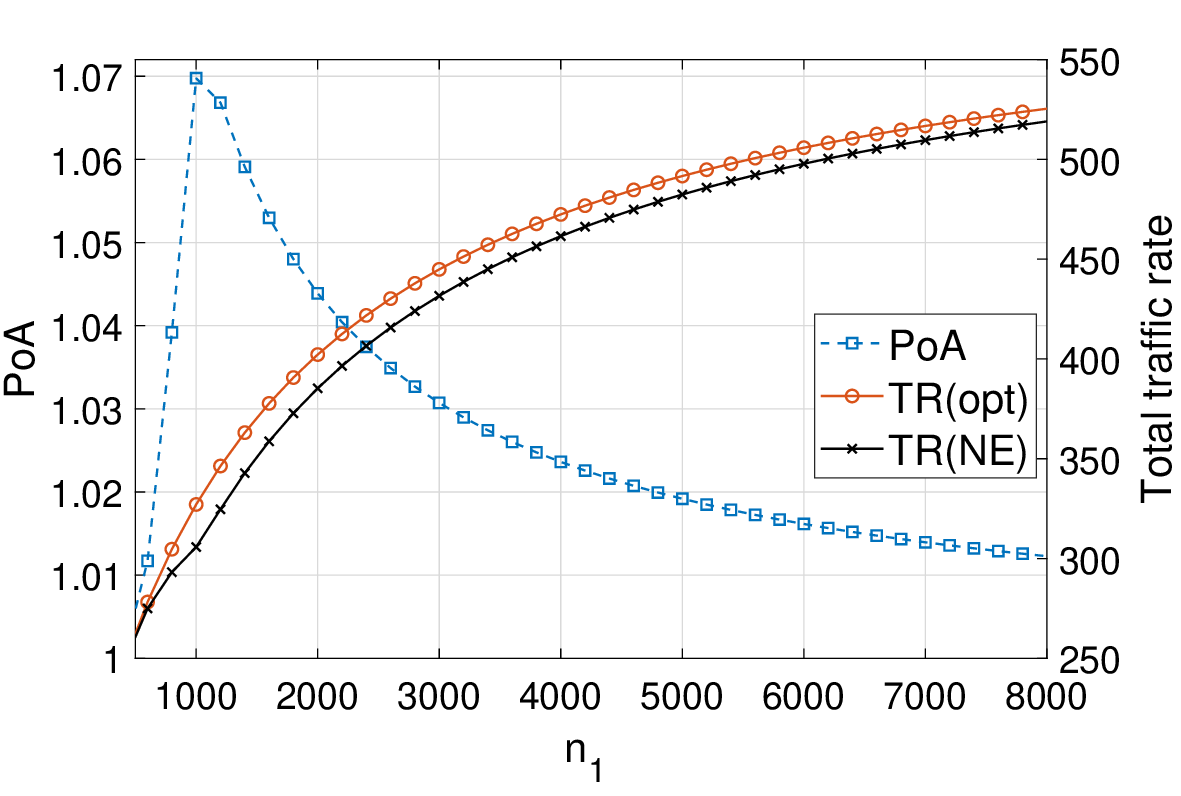}
         \caption{{$n_1\!\in\![500,8000]$, {$n_1$ is an integer} 
         with $n_2\!=\!100$, $\phi\!=\!1$, $\phi=1$, $q\!=\!0.7$.}}
         \label{2p_n1}
     \end{subfigure}
          \caption{ The PoA and total traffics in $opt$ and NE in two-source network 
         }
        \label{fig:2source}
\end{figure*}
Through numerical simulations, we explore the impact of traffic condition on network performance, i.e., the total traffic rate and PoA.
Recall that the traffic flow originating from each user is Poisson with rate $\phi$,
the service rate of each direct link is $\mu$, and the loss probability over each side  link is $q$. Assume $\phi=1$ for normalization.


 We first present the simulation results in two-source networks. In Fig.~\ref{fig:2source}, the PoA and the total traffics are plotted under different $q$, $\mu$ and $n_1$, showing a PoA of less than 1.08. Such a little gap between the optimal solution and the worst NE suggests that the gain of centralized-decision making over decentralized-decision making is trivial most of the time. As shown in Fig.~\ref{2p_q}, the total traffic decreases with the increase of $q$, i.e., the increased loss rate on sidelink. On the other hand, the $\text{PoA}$ is first increasing from $1$ at $q=0$, implying that the NE and optimal solution are the same with $u_1 = n_2 + v_1$. That is, we have the equal number of users on $(s_1,d)$ and $(s_2,d)$ in terms of both IP and DP. With the increase of $q$, the benefit of centralized-decision making is gradually unveiled. However, when $p$ reaches a certain value, the PoA goes down to $1$ quickly. An intuitive explanation is that, when $q$ becomes larger than the loss probability on DP, no users will choose IP in NE. And this strategy is optimal as well.


In Fig.~\ref{2p_mu}, the traffic rates grow with the increase of $\mu$ due to the increased probability of no congestion in \eqref{eq:loss_prob_direct}. This is, a high service rate help clear the collision and relieve congestion on both DP and IP. The PoA curve indicates that either in overloaded or less congested scenarios, there is little improvement of centralized-decision making. In Fig.~\ref{2p_n1}, the increased number of users leads to an increase of traffic rate in spite of the rise in loss rate. What is more, PoA tends towards $1$ for small and large $n_1$. As the strategies in $opt$ and NE are much similar for users at source $s_1$, i.e., DP in less biased scenario and IP severely biased scenario.
 


\begin{figure}[htp]
    \centering
    \includegraphics[scale=0.6]{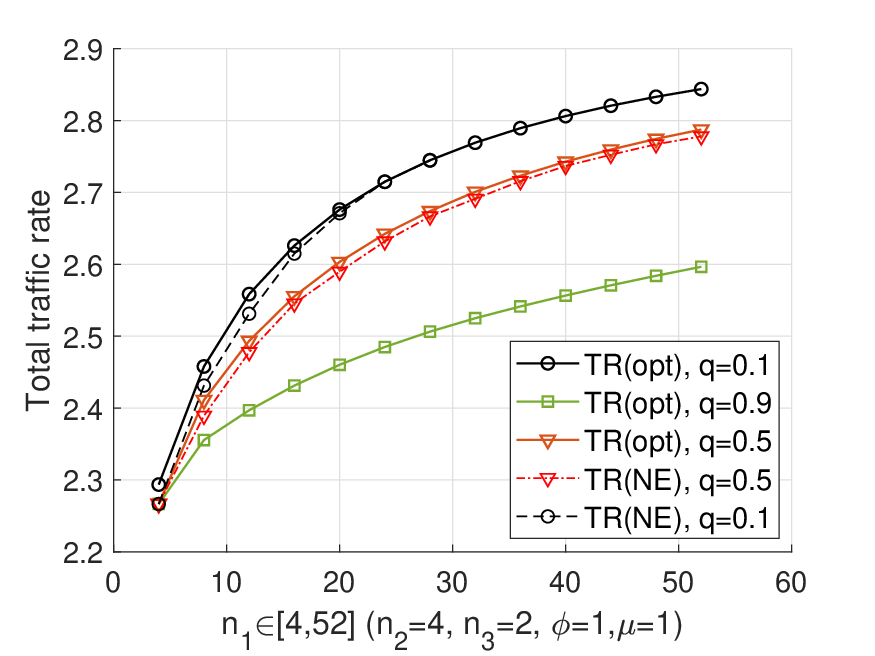}
    \caption{The impact of $n_1$ on total traffics in multi-source network. 
    }
    \label{multi_n1}
\end{figure}

\begin{figure}
    \centering
    \includegraphics[scale=0.6]{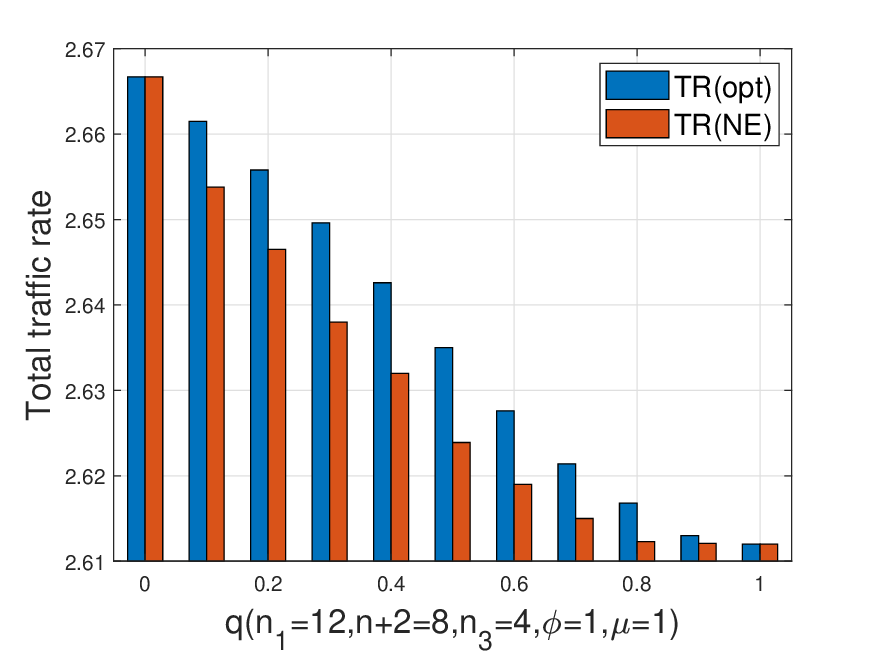}
    \caption{Comparison of NE strategy and optimal strategy in terms of traffic rate.}
    \label{3source_q_compare}
\end{figure}


In the multi-source network, while the optimal solution can be easily computed by Algorithm~\ref{alg:heuristic}, it is difficult to find all NEs even given Theorem~\ref{theo:msps}. {Hence,} we merely consider a small value of $m$ and $n$ (i.e., $m=3$). The service rate and traffic arrival rate are fixed as $\mu=1, \phi=1$. {Results are given in Fig.~\ref{multi_n1}, which shows similar results in Fig,~\ref{2p_n1}. It is obvious that the growth of the total traffic slows down gradually, because given the service rate, an increase of $n_1$ aggravates the network congestion. Second, the increase of loss rate on sidelink, leads to the increase of loss rate on IP. As a result, more users choose DP instead, which in turn worsens the network congestion. 



Figure \ref{3source_q_compare} plots the  performances  for a range of $q$. When $q=0$ and $q=1$, the PoA is exactly 1. 
The PoA converges to 1 when $q$ goes to 1, because when  the service rate is large enough compared with arrival rate, there is a sufficiently small congestion loss and all users like to choose DP.

\section{Conclusion}\label{sec:con}

In this work, we give a theoretical analysis of a load balancing game in cloud-enabled networks, in which the users want to minimize the loss probability of their packets with suitable routing strategies. In the centralized analysis, an efficient algorithm for maximizing the total traffic rate is proposed, according to Lemma~\ref{lem:both} and Lemma~\ref{lem:opt}. In the decentralized analysis, a characterization of  Nash equilibrium is given, and the PoA is investigated. Numerical experiments show that the efficiency loss due to selfish behaviors is relatively small in most cases.

There are many future directions that are worth exploring. First, we only focus on pure strategies of players in this work, and an immediate and natural question is how the users act when mixed strategies are allowed.  Second,  it would be interesting to investigate heterogeneous servers (source nodes) where each $s_i$ serves a different purpose or has a different service rate $\mu_i$. Moreover, while we only consider  direct path and one-hop indirect paths, a more general scenario where players can choose multi-hop indirect paths to the destination can be taken into consideration.



\bibliography{reference}

\begin{thebibliography}{10}
\providecommand{\url}[1]{#1}
\csname url@samestyle\endcsname
\providecommand{\newblock}{\relax}
\providecommand{\bibinfo}[2]{#2}
\providecommand{\BIBentrySTDinterwordspacing}{\spaceskip=0pt\relax}
\providecommand{\BIBentryALTinterwordstretchfactor}{4}
\providecommand{\BIBentryALTinterwordspacing}{\spaceskip=\fontdimen2\font plus
\BIBentryALTinterwordstretchfactor\fontdimen3\font minus
  \fontdimen4\font\relax}
\providecommand{\BIBforeignlanguage}[2]{{%
\expandafter\ifx\csname l@#1\endcsname\relax
\typeout{** WARNING: IEEEtran.bst: No hyphenation pattern has been}%
\typeout{** loaded for the language `#1'. Using the pattern for}%
\typeout{** the default language instead.}%
\else
\language=\csname l@#1\endcsname
\fi
#2}}
\providecommand{\BIBdecl}{\relax}
\BIBdecl

\bibitem{erlang1917solution}
A.~K. Erlang, ``Solution of some problems in the theory of probabilities of
  significance in automatic telephone exchanges,'' \emph{Post Office Electrical
  Engineer's Journal}, vol.~10, pp. 189--197, 1917.

\bibitem{jung2019revisiting}
K.~Jung, Y.~Lu, D.~Shah, M.~Sharma, and M.~S. Squillante, ``Revisiting
  stochastic loss networks: Structures and approximations,'' \emph{Mathematics
  of Operations Research}, vol.~44, no.~3, pp. 890--918, 2019.

\bibitem{cao2013optimal}
J.~Cao, K.~Li, and I.~Stojmenovic, ``Optimal power allocation and load
  distribution for multiple heterogeneous multicore server processors across
  clouds and data centers,'' \emph{IEEE Transactions on Computers}, vol.~63,
  no.~1, pp. 45--58, 2013.

\bibitem{ghomi2017load}
E.~J. Ghomi, A.~M. Rahmani, and N.~N. Qader, ``Load-balancing algorithms in
  cloud computing: A survey,'' \emph{Journal of Network and Computer
  Applications}, vol.~88, pp. 50--71, 2017.

\bibitem{zhao2015heuristic}
J.~Zhao, K.~Yang, X.~Wei, Y.~Ding, L.~Hu, and G.~Xu, ``A heuristic
  clustering-based task deployment approach for load balancing using bayes
  theorem in cloud environment,'' \emph{IEEE Transactions on Parallel and
  Distributed Systems}, vol.~27, no.~2, pp. 305--316, 2015.

\bibitem{scutari2010convex}
G.~Scutari, D.~P. Palomar, F.~Facchinei, and J.-S. Pang, ``Convex optimization,
  game theory, and variational inequality theory,'' \emph{IEEE Signal
  Processing Magazine}, vol.~27, no.~3, pp. 35--49, 2010.

\bibitem{mondal2020game}
S.~Mondal, G.~Das, and E.~Wong, ``A game-theoretic approach for non-cooperative
  load balancing among competing cloudlets,'' \emph{IEEE Open Journal of the
  Communications Society}, vol.~1, pp. 226--241, 2020.

\bibitem{yi2020queueing}
C.~Yi, J.~Cai, K.~Zhu, and R.~Wang, ``A queueing game based management
  framework for fog computing with strategic computing speed control,''
  \emph{IEEE Transactions on Mobile Computing}, vol.~21, no.~5, pp. 1537--1551,
  2020.

\bibitem{altman2014routing}
E.~Altman, J.~Kuri, and R.~El-Azouzi, ``A routing game in networks with lossy
  links,'' in \emph{the 7th International Conference on Network Games, Control
  and Optimization}.\hskip 1em plus 0.5em minus 0.4em\relax IEEE, 2014, pp.
  91--94.

\bibitem{toure2020congestion}
B.~Toure, S.~Paturel, and E.~Altman, ``Congestion load balancing game with
  losses,'' in \emph{the 8th International Conference on Wireless Networks and
  Mobile Communications (WINCOM)}.\hskip 1em plus 0.5em minus 0.4em\relax IEEE,
  2020, pp. 1--5.

\bibitem{patriksson2015traffic}
M.~Patriksson, \emph{The traffic assignment problem: models and methods}.\hskip
  1em plus 0.5em minus 0.4em\relax Courier Dover Publications, 2015.

\bibitem{koutsoupias1999worst}
E.~Koutsoupias and C.~Papadimitriou, ``Worst-case equilibria,'' in \emph{Annual
  Symposium on Theoretical Aspects of Computer Science}.\hskip 1em plus 0.5em
  minus 0.4em\relax Springer, 1999, pp. 404--413.

\bibitem{rosenthal1973class}
R.~W. Rosenthal, ``A class of games possessing pure-strategy nash equilibria,''
  \emph{International Journal of Game Theory}, vol.~2, no.~1, pp. 65--67, 1973.

\bibitem{acemoglu2018informational}
D.~Acemoglu, A.~Makhdoumi, A.~Malekian, and A.~Ozdaglar, ``Informational
  braess’ paradox: The effect of information on traffic congestion,''
  \emph{Operations Research}, vol.~66, no.~4, pp. 893--917, 2018.

\bibitem{chen2018equilibrium}
X.~Chen, X.~Hu, and C.~Wang, ``The equilibrium existence of a robust routing
  game under interval uncertainty,'' in \emph{International Symposium on
  Algorithmic Game Theory}.\hskip 1em plus 0.5em minus 0.4em\relax Springer,
  2018, pp. 189--200.

\bibitem{chen2020efficiency}
X.~Chen, X.~Hu, C.~Wang, and X.~Wu, ``The efficiency of nash equilibria in the
  load balancing game with a randomizing scheduler,'' \emph{Theoretical
  Computer Science}, vol. 838, pp. 180--194, 2020.

\bibitem{chowdhury2018non}
S.~Chowdhury and C.~Giri, ``Non-cooperative game theory based congestion
  control in lossy wsn,'' in \emph{2018 IEEE Global Communications Conference
  (GLOBECOM)}.\hskip 1em plus 0.5em minus 0.4em\relax IEEE, 2018, pp. 1--7.

\bibitem{bean1997braess}
N.~Bean, F.~Kelly, and P.~Taylor, ``Braess's paradox in a loss network,''
  \emph{Journal of Applied Probability}, pp. 155--159, 1997.

\bibitem{kameda2000braess}
H.~Kameda, E.~Altman, T.~Kozawa, and Y.~Hosokawa, ``Braess-like paradoxes in
  distributed computer systems,'' \emph{IEEE Transactions on Automatic
  Control}, vol.~45, no.~9, pp. 1687--1691, 2000.

\bibitem{kameda2002paradoxes}
H.~Kameda and O.~Pourtallier, ``Paradoxes in distributed decisions on optimal
  load balancing for networks of homogeneous computers,'' \emph{Journal of the
  ACM (JACM)}, vol.~49, no.~3, pp. 407--433, 2002.

\end{thebibliography}
\bibliographystyle{IEEEtran}
\end{document}